\titlespacing*{\paragraph}{0pt}{1ex plus 1ex minus .2ex}{1em}
\newtheorem{thm}{Theorem} 
\newtheorem{prop}{Proposition}
\newtheorem{lem}{Lemma}
\newtheorem{observation}{Observation}
\theoremstyle{definition}
\newtheorem{definition}{Definition}
\newcommand{\newchi}{\chi_{A \rightarrow B}}
\title{A strengthened bound on the number of states required to characterize maximum
parsimony distance\footnote{Keywords: phylogenetics, maximum parsimony distance, convexity, states, bounds, algorithms, combinatorics.}}
\author{Mareike Fischer\footnote{University of Greifswald, Germany.}, \text{ }Steven Kelk\footnote{Maastricht University, The Netherlands. Corresponding author: \texttt{steven.kelk@maastrichtuniversity.nl}.} \text{ }and Sofia Vazquez Alferez\footnote{Utrecht University, The Netherlands.}}
\date{}
\begin{document}

\maketitle
\begin{abstract}
In this article we prove that the distance $d_{\mathrm{MP}}(T_1,T_2) = k$ between two unrooted binary phylogenetic trees $T_1, T_2$ on the same set of taxa can be defined by a character that is convex on one of $T_1, T_2$ and which has at most $2k$ states. This significantly improves upon the previous
bound of $7k-5$ states. We also show that for every $k \geq 1$ there exist two trees $T_1, T_2$ with $d_{\mathrm{MP}}(T_1,T_2) = k$ such that at least $k+1$ states are necessary in any character that achieves
this distance and which is convex on one of $T_1, T_2$. We augment these lower and
upper bounds with an empirical analysis which shows that in practice significantly fewer than $k+1$ states
are usually required.
\end{abstract}

\section{Introduction}
A phylogenetic tree is a tree whose leaves are bijectively labelled by a set $X$ of species (or more generally \emph{taxa}). Such trees are commonplace abstractions in the study of evolution: internal nodes represent putative branching events that gave rise to the set of taxa $X$.

Many methods exist for constructing phylogenetic trees. It is not uncommon for different methods to produce slightly different trees for the same set of species $X$, or a single method to give rise to multiple equally likely trees. This motivates the study of distances between phylogenetic trees. In this article we focus on a metric distance known as \emph{maximum parsimony distance} ($d_{\mathrm{MP}}$), which
quantifies the dissimilarity of two unrooted, binary phylogenetic trees on the same set of taxa $X$. The distance, whose
inception can be traced back to an idea of Bruen and Bryant \cite{bruen2008parsimony},
was first analysed in in \cite{FischerKelk2016,moulton2015parsimony}. The definition  is as follows. A character $\chi$ is a partition of $X$: each block of the character, known as a state, indicates that the taxa in the block have some evolutionary trait in common. Characters are a fundamental unit of measurement in the systematic study of evolution. The parsimony score of a tree $T$ on $\chi$ is defined as the minimum number of state changes along the edges of $T$, ranging over all extensions of the states to the interior nodes of $T$.
The maximum parsimony distance of two trees $T_1$ and $T_2$ is the maximum, ranging over all characters $\chi$, of the
absolute difference in parsimony scores of $T_1$ and $T_2$, on $\chi$.  Informally, if $T_1$ and $T_2$ have small $d_{\mathrm{MP}}$, then $T_1$ and $T_2$ will fit all characters, observed and as yet unobserved, equally well.

The importance of the maximum parsimony distance $d_{\mathrm{MP}}$ is twofold: First, unlike tree-rearrangement metrics such as the Subtree Prune and Regraft (SPR) or Tree Bisection and Reconnect (TBR) distances \cite{allen2001subtree,Hickey2008}, which are purely topological \cite{st2017shape}, $d_{\mathrm{MP}}$ captures how differently two trees explain evolutionary change, as it quantifies the largest possible discrepancy in parsimony scores across all possible characters. This makes it directly relevant to understanding the interpretive differences between competing evolutionary hypotheses. 
At the same time, it shares some of the attractive features of
SPR and TBR distance, such as being robust against the type of changes in trees that are induced by single reticulate (i.e. `horizontal') evolutionary events. In contrast, distances such as the Robinson-Foulds distance can change dramatically under the action of reticulate events \cite{allen2001subtree,SempleSteel2003}.

Computing $d_{\mathrm{MP}}$ is NP-hard \cite{FischerKelk2016,kelk2017complexity}, which has stimulated quite some research into
exactly and approximately computing the distance; we refer to \cite{jones2021maximum,DEEN2024103477} for an overview of recent algorithmic results. $d_{\mathrm{MP}}$ also turns out in practice to be a strong lower
bound on the TBR distance between two trees, $d_{\mathrm{TBR}}$, and
can greatly assist in computing this latter distance \cite{van2022reflections}, which is also NP-hard.

One crucial combinatorial and algorithmic insight is that $d_{\mathrm{MP}}$ can be computed
by restricting our attention to characters $\chi$ that are convex on one of the trees: this means that the parsimony score of that tree on $\chi$, is one less than the number of states in $\chi$. (In fact, we can further restrict our attention to convex characters that have at least two taxa per state \cite{kelk2017complexity}). 
This raises the question of just how many such convex characters there can be. In \cite{kelk2017note} it was shown that there are $\Theta(\phi^n)$ such characters, where $\phi \approx 1.618$ is the golden ratio and $n=|X|$. In 2016 Boes et al \cite{boes2016linear} adopted a more fine-grained analysis and proved that $d_{\mathrm{MP}}$ can be computed by focussing on convex characters that have at most $7k-5$ states, where $k = d_{\mathrm{MP}}$ (and $k \geq 1$). Later, as a by-product of a kernelization argument, it was shown in \cite{jones2021maximum} -- using a completely different argument -- that (ignoring additive terms) 560$k$ states are sufficient. Here we build on the results of \cite{boes2016linear} and strengthen the $7k-5$ bound to $2k$. Although we use the same machinery
as \cite{boes2016linear}, our technique and analysis is significantly more nuanced. One important feature of our analysis is
to use Fitch's algorithm \cite{Fitch71}, a well-known algorithm for computing the parsimony score of a tree on a character
$\chi$, to help us reason about the way the parsimony score of the tree changes under the action of careful changes to $\chi$. Fitch's
algorithm also plays a central role in our second main result, a lower bound, which shows that for every $k \geq 1$ a convex character with at least $k+1$ states is sometimes necessary. Clearly, there remains a gap between the $k+1$ lower bound and the $2k$ upper bound. What is the `true' bound? To develop more intuition for this we undertake in the third part of the article an empirical analysis of 644 tree pairs from an existing dataset \cite{van2022reflections}. We find that none of the tree pairs require more than $k$ states and, on average, only $0.44k$ states are required. This strengthens our belief, discussed in the conclusion, that the $2k$ upper bound can be lowered to $k+1$, although we are still some way from proving that. The conclusion also reflects on the challenges associated with closing this gap.  

Our result has \emph{indirect} algorithmic significance, in the following sense. If we have access to an upper bound $U$ on $d_{\mathrm{MP}}$, we can now compute $d_{\mathrm{MP}}$ exactly by restricting our attention to convex characters (that
have at least two taxa per state) with at most $2U$ states.  Such an upper bound can be obtained, for example, by
generating any upper bound on $d_{\mathrm{TBR}}$, and such an upper bound can be obtained by generating agreement
forests exactly or heuristically \cite{van2022reflections}. The number of convex characters grows far more slowly than the number
of general characters (which correspond to Bell numbers). Indeed, \cite{kelk2017note}, following \cite{steel1992complexity}, showed that there are (only)
$\binom{n-s-1}{s-1}$ convex characters on $s$ states (with at least two taxa per state). To put this in context: there are $\Theta(2^n)$ 2-state characters, but only a linear number of 2-state convex characters. To date the only moderately practical methods for computing (respectively, lower bounding) $d_{\mathrm{MP}}$ are based on enumeration (respectively, sampling) of convex characters \cite{kelk2017complexity,kelk2017note,van2022reflections}. Hence, the strengthening of the
upper bound from $7k-5$ to $2k$ potentially further accelerates the exact computation of $d_{\mathrm{MP}}$, when
$d_{\mathrm{MP}}$ is relatively small and good upper bounds $U$ are available. In the conclusion we also touch on the question whether good upper bounds on the number of states could contribute to a strengthening of the kernelization result from \cite{jones2021maximum}.

\section{Preliminaries}
For notation we closely follow \cite{boes2016linear}. We work exclusively with undirected graphs; in fact, trees. An unrooted binary phylogenetic $X$-tree $T$ is a tree $T =(V(T), E(T))$ with only nodes of degree $1$ (leaves) or $3$ (inner nodes) 
such that the leaves are bijectively labeled by some finite label set $X$ (where $X$ is often called the set of \emph{taxa}). 
For brevity, such a tree will often simply be called an \emph{$X$-tree} or \emph{a tree} (on $X$); if $X$ is implied it is simply omitted.

Two distinct taxa $a, b \in X$ are said to be a \emph{cherry} in tree $T$ if they have a common neighbour. A character on $X$ is a surjective map $\chi:X\rightarrow \mathcal{C}$ where $\mathcal{C}$ is a set of character \emph{states}; the number of distinct
states in the character is denoted by $|\chi|$. We can equivalently view $\chi$ as a partition of $X$ into blocks, where two taxa $x,y$ are in the same block of the partition if and only if $\chi(x) = \chi(y)$.
An \emph{extension} $\bar{\chi}$ of
a character $\chi$ to a whole $X$-tree $T$ is a map $\bar{\chi}:V(T)\rightarrow\mathcal{C}$ such that $\bar{\chi}(x) = \chi(x)$ for all $x\in X$. 
A \emph{mutation} induced by $\bar{\chi}$ in $T$ 
is an edge $\{u,v\}\in E(T)$ satisfying $\bar{\chi}(u) \neq \bar{\chi}(v)$,
and we write $\Delta(T,\bar{\chi})$ for the set of all mutation edges. 
The extension $\Bar{\chi}$ is said to be \emph{most parsimonious} if it achieves the minimum number of mutations over all possible extensions to $T$ of the character $\chi$. This leads naturally to the definition of parsimony score.

\begin{definition}
    Let $T$ be any $X$-tree and let $\chi$ be any character on $X$.

    Then the \emph{parsimony score} of $\chi$ on $T$ is $$\ell(T,\chi):=\min_{\Bar{\chi}}|\Delta(T,\Bar{\chi})| =\min_{\Bar{\chi}}\{ \{u,v\}\in E(T) \mid \Bar{\chi}(u)\neq\Bar{\chi}(v)\},$$
    where the minimum is taken over all possible extensions $\Bar{\chi}$ of the character $\chi$ to $T$.
\end{definition}

It is well-known that $\ell(T,\chi)\geq|\chi|-1$. If
$\ell(T,\chi)=|\chi|-1$
then $\chi$ is said to be a \emph{convex} character on $T$. Equivalently: $\chi$ is a convex character on $T$ if, viewed as a partition, the blocks of $\chi$ induce pairwise disjoint spanning trees in $T$.

Although characters are defined on a set $X$ of taxa, this set of taxa will often be implicit. We now use the parsimony score to define a distance function on pairs of trees.

\begin{definition}
    Let $(T_1,T_2)$ be a pair of $X$-trees.

    Then the \emph{maximum parsimony distance} between $T_1$ and $T_2$ is 
    $$d_{\mathrm{MP}}\left(T_1, T_2\right):=\max _\chi\left|\ell\left(T_1, \chi\right)-\ell\left(T_2, \chi\right)\right|,$$
    where the maximum is taken over all possible characters $\chi$ on $X$.
\end{definition}

It is known that $d_{\mathrm{MP}}$ is a metric on unrooted phylogenetic trees, hence we call it a distance \cite{FischerKelk2016}.

A character $\chi$ on a set $X$ of taxa is said to \emph{achieve distance $k$} on a pair $\left(T_1, T_2\right)$ of $X$-trees when $\left|\ell\left(T_1, \chi\right)-\ell\left(T_2, \chi\right)\right|= k$. If this character achieves distance $d_{\mathrm{MP}}\left(T_1, T_2\right)$, then we say that $\chi$ is an \emph{optimal} character for this pair of trees.

An optimal character for a pair of trees which has the additional property of being convex on at least one of the trees is called an \emph{optimal convex} character (for this pair of trees). Note that if $d_{\mathrm{MP}}\left(T_1, T_2\right) > 0$, then an optimal convex character will be convex on exactly one of $T_1$ and $T_2$.

The forest $F$ \emph{induced} by an extension $\Bar{\chi}$ (of a character $\chi$ to a $X$-tree $T$) is the forest obtained by deleting all mutation edges from $T$. Each of the connected components of $F$ is a subtree of $T$, whose nodes all share a common character state (assigned by $\Bar{\chi}$). 
We then say that two components of $F$ are \emph{adjacent} if the two corresponding subtrees of $T$ are connected by one mutation edge (they
cannot be connected by more than one mutation edge, since
there are no cycles in $T$). 
This yields a graph $G(F)$ where the nodes are the components of $F$ 
and the edges are the (unordered) pairs of adjacent components, 
which
can be identified with the mutation edges of $T$. Note that $G(F)$ is itself a tree. Often, when it is clear from the context, we will simply use $F$ to refer to $G(F)$.

When $\Bar{\chi}$ is a most parsimonious extension, each component
of the forest must contain at least one leaf of $T$. This
in turn implies that a most parsimonious extension never
introduces redundant states, i.e.,  states that were not in the
original character. Also, it is crucial to note that the forest (and
its tree structure) depends on the choice of the extension $\Bar{\chi}$: two different most parsimonious extensions may
yield different induced forests.
\begin{definition}\label{def:unique-and-repeated-states}
    Let $F$ be the forest induced by a 
    most parsimonious extension $\Bar{\chi}$ of a character $\chi$. 
    Let $C$ be the set of states used by $\Bar{\chi}$ (which will be equal to the set of states used by $\chi$). 
    We can distinguish between different kinds of
    states and components:
    \begin{itemize}
        \item a state of $\chi$ is \emph{unique} if it is assigned to exactly one component of $F$,
        \item a state of $\chi$ is \emph{repeating} if it is assigned to at least $k \geq 2$ components of $F$, in which case $k$ is the \emph{multiplicity} of the state,
        \item a component of $F$ is \emph{unique} if its assigned state is a unique state of $\chi$,
        \item a component of $F$ is \emph{repeating} if its assigned state is a repeating state of $\chi$.
    \end{itemize}
\end{definition}

Note that each state is either unique or repeating, but not
both. See Figure \ref{fig:intro} for an illustration of these concepts.

\begin{figure}[ht]      \centering\vspace{0.5cm} 
   \includegraphics[width=11cm]{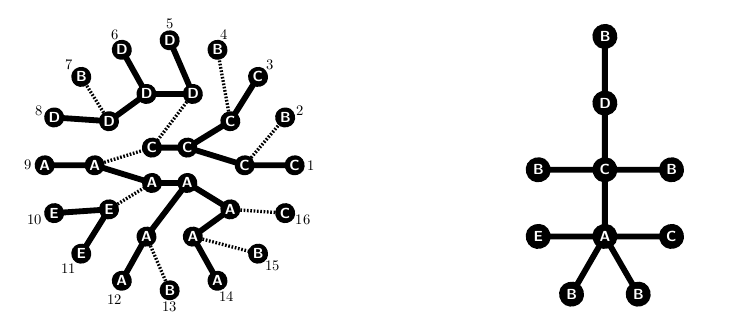} 
    \caption{Adapted from \cite{boes2016linear}. \emph{Left:} The forest $F$ induced by a most parsimonious extension $\Bar{\chi}$ of the character $\chi = (CBCBDDBDAEEABABC)$ on an $X$-tree
with leaves labeled from 1 to 16. Dotted edges are mutation edges. \emph{Right:} The corresponding graph structure $G(F)$.  States $B$ and $C$ are repeating states, while $A$, $D$ and $E$ are
unique states.
} \label{fig:intro}
  \end{figure}

Let $(T_1, T_2)$ be a pair of $X$-trees and let $\chi$ be an optimal convex character for this pair. Without loss of generality, let $\chi$ be convex on $T_1$; we emphasize that we will persist in this assumption throughout the remainder of the article.  Let $\Bar{\chi}_1$ be a most parsimonious extension
of $\chi$ to
$T_1$ and
$\Bar{\chi}_2$ a most parsimonious extension
of $\chi$ to $T_2$. Let $F_1$ and $F_2$ be the forests induced by
$\Bar{\chi}_1$ and $\Bar{\chi}_2$ 
respectively. We say that two components $A$ and $B$ are $F_i$-\emph{adjacent} if they are adjacent in the forest $F_i$.  (Note that if a state
is unique, or we are focussing on $F_1$, the term ``state'' and ``component'' can be used interchangeably.)

Let $A$ and $B$ be two distinct states that are $F_1$-adjacent. Let $\newchi$ be the new character obtained from $\chi$ by relabeling $A := B$. Then 
$\ell(T_1, \newchi) = \ell(T_1, \chi) - 1$, and $\newchi$ is also a convex character (on $T_1$) that uses exactly one fewer state than $\chi$. If $\chi$ is an optimal convex character, and $\ell(T_2, \newchi) = \ell(T_2, \chi) - 1$\footnote{Note that  $\ell(T_2, \newchi) = \ell(T_2, \chi) - 1$ is equivalent to
 $\ell(T_2, \newchi) \geq \ell(T_2, \chi) - 1$ under these circumstances. This is because if $\ell(T_2, \newchi) > \ell(T_2, \chi) - 1$, then $\newchi$ achieves a higher distance than $\chi$ i.e. the assumed optimality
 of the original character $\chi$ is contradicted.}
then $\newchi$ is also an optimal convex character \emph{with a smaller number of states}. We call such a pair of states $A$ and $B$ a \emph{good pair} (relative to $\Bar{\chi}_1$). 

Clearly, if $\chi$ is an optimal convex character for ($T_1, T_2$) that uses a minimum number of states then $\chi$ has no good pairs (relative to \emph{any} $\Bar{\chi}_1$), because otherwise this would contradict the minimality assumption. As a concrete example, consider two trees $T_1$ and $T_2$ on 4 taxa $X = \{a,b,c,d\}$ where $T_1$ consists of two cherries $\{a,b\}$ and $\{c,d\}$, and $T_2$ consists of two cherries $\{a,c\}$ and $\{b,d\}$. Here
$d_{\mathrm{MP}}(T_1,T_2) = 1$ and optimal convex characters for these two trees require at least (in fact, exactly) 2 states. A character $\chi$ which maps $a$ and $b$ to state $A$ and $c$ and $d$ to state $B$ is such a character. This is convex on $T_1$, and a most parsimonious extension  $\Bar{\chi}_1$ of $\chi$ to $T_1$ is here uniquely defined, so $F_1$ is uniquely defined. States $A$ and $B$ are $F_1$-adjacent, and $\ell(T_1, \newchi) = \ell(T_1, \chi) - 1$, but $A$ and $B$ are not a good pair (relative to $\Bar{\chi}_1$) because  $\ell(T_2, \newchi) = \ell(T_2, \chi) - 2$.
In the remainder of the article we often drop the ``relative to $\Bar{\chi}_1$'' qualification when discussing good pairs, leaving it implicit.

\paragraph{Fitch's algorithm:} In this article there is a central
role for Fitch's algorithm \cite{Fitch71}, which computes $\ell(T, \chi)$ and a corresponding most parsimonious extension $\bar{\chi}$ of $\chi$ to $T$ in polynomial time. 
Fitch's algorithm actually works on \emph{rooted} binary phylogenetic trees, so for an unrooted binary phylogenetic tree it is
first necessary to root it by arbitrarily selecting an edge and subdividing it with a single node $r$, that is known as the \emph{root} node. Subsequently all edges are directed away from $r$. This ensures that the
ancestor-descendant relationship between nodes is well-defined, and does not alter the parsimony score. The algorithm has two phases. 

In the first phase, also known as the \emph{bottom-up} phase, we start by assigning to each taxon a subset of states consisting of only the single state it is assigned by $\chi$.
The internal nodes of $T$ are assigned subsets of states recursively, as follows. Suppose
a node $p$ has two children $u$ and $v$, and the bottom-up phase has already assigned
subsets $\Phi(u)$ and $\Phi(v)$ to the two children, respectively. If $\Phi(u) \cap \Phi(v) \neq \emptyset$, then
set $\Phi(p) = \Phi(u) \cap \Phi(v)$ (in which case we say that $p$ is an \emph{intersection} node). If
$\Phi(u) \cap \Phi(v) = \emptyset$, then set $\Phi(p) = \Phi(u) \cup \Phi(v)$ (in which case we say that $p$ is a
\emph{union} node). The number of union nodes in the bottom-up phase is equal to $\ell(T,\chi)$.
To actually create a most parsimonious extension $\Bar{\chi}$ of $\chi$ to $T$, we need the second, \emph{top-down} phase of Fitch’s
algorithm\footnote{To obtain a most parsimonious extension for the original unrooted tree, we simply ignore node $r$ at the end.}. Start at the root $r$ and let $\bar{\chi}(r)$ be an \textbf{arbitrary} element in $\Phi(r)$. For an internal
node $u$ with parent $p$, we set $\bar{\chi}(u) = \bar{\chi}(p)$ (if $\bar{\chi}(p) \in \Phi(u)$) and otherwise (i.e.
$\bar{\chi}(p) \not \in \Phi(u)$) set $\bar{\chi}(u)$ to be an \textbf{arbitrary} element of $\Phi(u)$; this creates a mutation between $p$ and $u$\footnote{Note that, by construction, Fitch's algorithm never creates a mutation on \emph{both} outgoing edges of a node $p$.}. Note that the arbitrary tie-breaking choices in the top-down phase, which can arise at the root or at the head of a newly created mutation edge, can be used to generate multiple distinct most parsimonious extensions (although not necessarily \emph{all} such extensions). Later in the article we will explain how we break ties in a certain canonical way in order to discourage creation of unique states and thereby to facilitate our analysis.

Consider a \emph{rooted} binary phylogenetic tree $T$ on $X$ and a character $\chi$. Let $r$ be the root of $T$.
Suppose we run the bottom-up phase of Fitch's algorithm on $T$. The following insight is originally due to Fitch \cite{Fitch71}. Notably, the first part of the lemma applies to any/all most parsimonious extensions, not just those generated by Fitch's algorithm.

\begin{lem}
\label{lem:fitchoptimal}
A most parsimonious extension of $\chi$ to $T$ always has a state from $\Phi(r)$ at the root $r$. Also, for each
state in $\Phi(r)$, there exists a most parsimonious extension of $\chi$ to $T$ that assigns that
state to the root $r$.
\end{lem}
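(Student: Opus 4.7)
The plan is to prove a strengthened statement by induction on the height of the rooted tree, from which both halves of the lemma drop out immediately. Specifically, for every internal/leaf node $p$ of $T$ (with associated subtree $T_p$ and restricted character $\chi|_{T_p}$) and every state $s$, I will show: (i) if $s \in \Phi(p)$, there exists an extension of $\chi|_{T_p}$ to $T_p$ that assigns $s$ to $p$ and achieves cost exactly $\ell(T_p,\chi|_{T_p})$; (ii) if $s \notin \Phi(p)$, then every extension that assigns $s$ to $p$ has cost strictly greater than $\ell(T_p,\chi|_{T_p})$. Setting $p = r$ gives the two halves of the lemma: (i) yields the second statement directly, and the contrapositive of (ii) yields the first. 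The base case (leaves) is immediate since $\Phi(x) = \{\chi(x)\}$ and the unique extension has cost $0$.

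For the inductive step, let $p$ have children $u,v$, and write $\ell_u := \ell(T_u,\chi|_{T_u})$ and $\ell_v := \ell(T_v,\chi|_{T_v})$. Any extension with $\bar\chi(p)=s$, $\bar\chi(u)=s_u$, $\bar\chi(v)=s_v$ has cost equal to the cost of the sub-extension on $T_u$ (at least $\ell_u$ by IH, with equality iff $s_u \in \Phi(u)$) plus the same for $T_v$, plus $[s \neq s_u] + [s \neq s_v]$ for the two edges at $p$. If $p$ is an intersection node then $\ell(T_p) = \ell_u + \ell_v$; when $s \in \Phi(p) = \Phi(u) \cap \Phi(v)$, take $s_u = s_v = s$ and invoke IH to hit the target cost, while if $s \notin \Phi(p)$, achieving cost $\ell_u+\ell_v$ would force $s_u \in \Phi(u)$, $s_v \in \Phi(v)$ and $s_u = s = s_v$, contradicting $s \notin \Phi(u)\cap\Phi(v)$. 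If $p$ is a union node then $\ell(T_p) = \ell_u+\ell_v+1$; when $s \in \Phi(p)$, WLOG $s \in \Phi(u)\setminus\Phi(v)$, take $s_u=s$ and any $s_v \in \Phi(v)$ to pay exactly one mutation at $\{p,v\}$; when $s \notin \Phi(p)$ (so $s$ lies in neither $\Phi(u)$ nor $\Phi(v)$), a quick enumeration over whether $s_u = s$ and whether $s_v = s$ shows that each combination incurs either an IH surcharge or an extra mutation in both the $u$-side and the $v$-side, giving total cost at least $\ell_u+\ell_v+2 > \ell(T_p)$.

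The only place that requires care is the union sub-case where $s \notin \Phi(p)$: one must check all four combinations of $(s_u = s?, s_v = s?)$ and verify that in each one the combined contribution of IH costs and the two indicator terms $[s \neq s_u]+[s \neq s_v]$ exceeds $\ell_u+\ell_v+1$. This is the main obstacle in the sense of being the one spot where sloppy bookkeeping can miss a case, but each case reduces to IH plus counting, and nothing deeper is needed. Once this induction is in place, the lemma is immediate, and in particular the first statement applies to \emph{any} most parsimonious extension — not only those produced by Fitch's top-down phase — as required for its later use in the paper.
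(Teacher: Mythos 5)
Your proof is correct. Note, however, that the paper does not actually prove this lemma at all: it states it as a classical fact and cites Fitch (1971), so there is no in-paper argument to compare yours against. Your induction is the standard textbook proof of the correctness of Fitch's algorithm, and it is carried out soundly: you prove the right strengthened invariant (states in $\Phi(p)$ are exactly those attainable at $p$ by a minimum-cost extension of the subtree, and any other state strictly increases the cost), the cost of an extension of $T_p$ decomposes exactly as the two subtree costs plus the two edge indicators, and the four-way case check in the union sub-case with $s \notin \Phi(p)$ does yield $\geq \ell_u + \ell_v + 2$ in every case. Crucially, because part (ii) of your invariant is a statement about \emph{all} extensions, your argument really does deliver the first half of the lemma for arbitrary most parsimonious extensions, not just Fitch-generated ones, which is exactly the form in which the paper later uses it (e.g.\ in Lemma \ref{lem:aboveAndBelow}). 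The only cosmetic gap is that you assert $\ell(T_p)=\ell_u+\ell_v$ (intersection) and $\ell(T_p)=\ell_u+\ell_v+1$ (union) as known; these equalities are themselves consequences of the same induction and are implicitly re-derived by your case analysis, so nothing is missing, but stating them as part of the inductive hypothesis would make the write-up self-contained.
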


\noindent The lemma also extends to subtrees of $T$. Given a rooted binary phylogenetic tree $T$ on $X$, a character $\chi$ on $X$ and a node $v$ of $T$, let $T(v)$ be the subtree of $T$ rooted at $v$, and let $\chi(v)$ be the restriction of $\chi$ to the taxa at the leaves of $T(v)$. Lemma \ref{lem:fitchoptimal} implies that a most parsimonious extension of $\chi(v)$ to $T(v)$ must have a state from $\Phi(v)$ at $v$, and for each state in $\Phi(v)$ there exists a most parsimonious extension of $\chi(v)$ to $T(v)$ that assigns that
state to $v$.

\subsection*{Results}
We recall the following earlier result proven in \cite{boes2016linear}:

\begin{thm}[Bounded States Theorem]
\label{thm:original}
    Any pair $(T_1,T_2)$ of $X$-trees admits an optimal convex character with at most $7 d_{\mathrm{MP}}(T_1,T_2)-5$ states.
\end{thm}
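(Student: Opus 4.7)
The plan is to consider an optimal convex character $\chi$ for $(T_1,T_2)$ that uses the minimum possible number of states, call this number $s$. Without loss of generality $\chi$ is convex on $T_1$, so $\ell(T_1,\chi)=s-1$ and hence $\ell(T_2,\chi)=s-1+k$ where $k=d_{\mathrm{MP}}(T_1,T_2)$. The minimality of $s$ immediately forces $\chi$ to have no good pairs (relative to any most parsimonious extension $\bar{\chi}_1$): if some $F_1$-adjacent pair of states $A,B$ were a good pair, then $\chi_{A\to B}$ would be an optimal convex character with $s-1$ states, contradicting minimality. Translated: for every edge $\{A,B\}$ of $G(F_1)$, merging $A$ and $B$ either leaves $\ell(T_2,\chi)$ unchanged, or reduces it by at least $2$.

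I would then fix a most parsimonious extension $\bar{\chi}_2$ of $\chi$ to $T_2$ and analyse the induced forest $F_2$. Since $F_2$ has $\ell(T_2,\chi)+1=s+k$ components while $\chi$ uses only $s$ distinct states, there are exactly $k$ excess components, all attributable to repeating states. A straightforward count (each repeating state uses at least two components, hence contributes at least one excess) gives $r\le k$, where $r$ denotes the number of repeating states in $F_2$, and consequently the number $u$ of unique states satisfies $u=s-r\ge s-k$. To prove $s\le 7k-5$ it therefore suffices to bound $u$ linearly in $r$ (equivalently, in $k$).

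The core of the argument is a charging analysis over the $s-1$ edges of $G(F_1)$, exploiting the no-good-pair constraint. The constraint takes a particularly simple form near unique states: if $A$ and $B$ are $F_1$-adjacent and both unique in $F_2$, their $F_2$-components cannot be $F_2$-adjacent (the single $F_2$-mutation between them would be absorbed upon merging, producing a good pair); and if $A$ is unique while $B$ is repeating, the lone component of $A$ must be $F_2$-adjacent to either $0$ or at least $2$ components of $B$. Using Fitch's algorithm on $T_2$ to certify such local structural properties, one can charge each unique state to nearby repeating components of $F_2$. The main obstacle, and the source of the constant $7$, lies in bounding the charging load on each repeating component: one must rule out configurations in which a single repeating component is the charge-target of arbitrarily many unique $F_1$-neighbours, which requires a careful case analysis of the interaction between $G(F_1)$-edges and the $F_2$-adjacency structure of the component. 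A final accounting distributed over the $r+k$ repeating components should yield a bound of the form $u\le 6k-5$, whence $s=u+r\le 7k-5$, with the $-5$ correction absorbing small-$k$ edge cases (notably $k=1$, where the bound is sharp and $G(F_1)$ consists of a single edge).
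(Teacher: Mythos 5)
You have reconstructed the correct skeleton, and it is essentially the skeleton the paper itself describes: take a minimum-state optimal convex character $\chi$ convex on $T_1$, note that minimality kills all good pairs, classify states as unique or repeating relative to $F_2$, bound the number $r$ of repeating states by $k$ via the component count $\ell(T_2,\chi)+1=s+k$, and then bound the number $u$ of unique states by charging them to repeating components. (For the record, the paper does not reprove this theorem; it quotes it from Boes et al.\ and only proves the stronger $2k$ bound, whose proof in Section \ref{sec:Bounding-number-of-states} follows exactly this outline.) The problem is that your proposal stops where the theorem begins. The entire content of the $7k-5$ bound is the charging lemma you compress into ``a careful case analysis \ldots should yield a bound of the form $u\le 6k-5$'': no bound on the number of unique $F_1$-neighbours a repeating state can absorb is actually established, and without one the argument proves nothing. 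In the present paper the analogous step is Theorem \ref{thm:moonshot}, whose proof occupies all of Section \ref{sec:adjacency}; in Boes et al.\ it is the corresponding (weaker) structural analysis that produces the constant $7$. That is the proof, and it is missing.

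Moreover, the one concrete structural claim you do assert --- that a unique $A$ which is $F_1$-adjacent to a repeating $B$ must be $F_2$-adjacent to $0$ or to at least $2$ components of $B$ --- does not follow from the absence of good pairs. If $A$ is $F_2$-adjacent to exactly one $B$-component, then merging gives $\ell(T_2,\newchi)\le\ell(T_2,\chi)-1$, but the no-good-pair condition only forbids \emph{equality}: the score of the merged character could drop by $2$ or more under some entirely different most parsimonious extension, so $A$ being adjacent to a single $B$-component is perfectly consistent with $(A,B)$ not being a good pair. Lower-bounding $\ell(T_2,\newchi)$ over \emph{all} extensions of the merged character is precisely the hard direction; the paper needs the tie-breaking version of Fitch's algorithm and the ``flooding'' arguments built on it to control this, and your proposal contains no mechanism for it. So both the local structural facts and the global accounting that the proof rests on remain unproved.
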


The central idea behind this theorem is to argue that either a good pair exists, in which case a new optimal convex character can be found that uses fewer states, or that no good pairs exist and the number of states can then  be carefully bounded.

Most of the effort in the remainder of the article will be used to prove the following theorem (for a specially selected optimal character and most parsimonious extension). We prove this in Section \ref{sec:adjacency}.

\begin{restatable}[Adjacency Theorem]{theorem}{moonshot}\label{thm:moonshot}
    Let $B$ be a repeating state that occurs with multiplicity $k \geq 2$ in $F_2$. Suppose in $F_1$ that $B$ is adjacent to at least $2(k-1)$ unique states; let $U^B$ denote the set of these unique states.
    Then there exists $A \in U^B$ such that $(A,B)$ is a good pair.
\end{restatable}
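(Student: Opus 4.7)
The plan is to argue by contradiction: assume no $A \in U^B$ forms a good pair with $B$, and derive a contradiction via a structural counting argument on $G(F_2)$.

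First, I would establish that for each $A \in U^B$, the drop $\ell(T_2,\chi) - \ell(T_2,\chi_{A \rightarrow B})$ is at least $1$. Since $A$ and $B$ are $F_1$-adjacent, $\chi_{A \rightarrow B}$ remains convex on $T_1$ with $\ell(T_1, \chi_{A \rightarrow B}) = \ell(T_1,\chi) - 1$. Because $\chi$ is convex on $T_1$ we have $\ell(T_2,\chi) - \ell(T_1,\chi) = d_{\mathrm{MP}}(T_1,T_2)$, and applying optimality of $\chi$ to $\chi_{A \rightarrow B}$ yields $\ell(T_2, \chi_{A \rightarrow B}) \leq \ell(T_2,\chi) - 1$. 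Hence $(A,B)$ is a good pair if and only if the drop on $T_2$ equals exactly $1$, and our assumption forces every $A \in U^B$ to have drop on $T_2$ at least $2$.

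Next I would give a Fitch-algorithmic characterization of ``drop $\geq 2$.'' The easy direction is the upper bound $\ell(T_2, \chi_{A \rightarrow B}) \leq \ell(T_2, \chi) - j(A)$, where $j(A)$ denotes the number of $B$-components in $F_2$ that are $F_2$-adjacent to the (unique) $A$-component; this follows by simply relabelling $A$ to $B$ inside $\bar\chi_2$ and noting that exactly $j(A)$ mutation edges disappear. The harder direction is a partial converse: if the drop is at least $2$, then by carefully splitting the merged $B$-subtrees in an MPE of $\chi_{A \rightarrow B}$ back into $A$- and $B$-pieces (using the flexibility provided by \Cref{lem:fitchoptimal} applied to subtrees), one can produce an MPE $\bar\chi_2'$ of $\chi$ on $T_2$ in which the $A$-component is $F_2$-adjacent to at least two distinct $B$-components, i.e.\ $j(A) \geq 2$ in the new $F_2$. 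This is the step where Fitch's algorithm does the real work, since it allows us to re-root and re-extend while controlling which state appears on disputed internal nodes.

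The third step is a counting argument in $G(F_2)$. The $k$ $B$-components give $k$ distinguished vertices, and no two of them are directly adjacent in $G(F_2)$ (a $B$-$B$ adjacency would collapse). Contracting every maximal connected non-$B$ region to a super-vertex produces a tree whose every edge is $B$-super; a degree-counting argument then shows that at most $k - 1$ super-vertices can have $B$-degree $\geq 2$. Translating back to $G(F_2)$, and accounting for the fact that a single super-region can contribute up to two distinct components on opposite ``sides'' of a $B$-pair path (the source of the factor $2$), bounds the total number of candidate bad $A$'s across all admissible choices of $\bar\chi_2$ by $2(k-1) - 1$. Since $|U^B| \geq 2(k-1)$, at least one $A \in U^B$ cannot be bad and therefore forms a good pair with $B$.

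The main obstacle will be the Fitch-based converse in the second step, together with the sharp counting argument that explains the exact factor $2(k-1)$. In particular, showing that reshaping $F_2$ via alternative MPEs can only double (and not worse) the number of bad candidates requires a careful tie-breaking convention in the top-down phase of Fitch's algorithm that suppresses spurious unique states, exactly the canonical choice that the authors signal earlier in the paper.
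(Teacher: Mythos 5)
Your opening reduction is correct and matches the paper's setup: since $A$ and $B$ are $F_1$-adjacent and $\chi$ is optimal, the drop $\ell(T_2,\chi)-\ell(T_2,\newchi)$ is always at least $1$, and $(A,B)$ is a good pair precisely when it equals $1$; your ``easy direction'' $\ell(T_2,\newchi)\le \ell(T_2,\chi)-j(A)$ is also fine. The genuine gap is the converse in your second step: the claim that a drop of at least $2$ forces the existence of a most parsimonious extension of $\chi$ on $T_2$ in which the $A$-component is adjacent to at least two distinct $B$-components. You offer no argument for this beyond ``carefully splitting the merged $B$-subtrees,'' and it does not hold in the form you need: the drop can exceed $j(A)$ because, after relabelling $X_A$ to $B$, a new optimal extension can re-route state $B$ \emph{through} an intermediate non-$B$ component lying between the old $A$-region and a distant $B$-component, saving a mutation there without $A$ ever becoming multiply adjacent to $B$ in any induced forest. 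Controlling exactly this ``cutting through'' phenomenon is the entire technical content of the paper's proof, and it is achieved not by a characterization of badness but by fixing $\bar{\chi}_2$ via the TB tie-breaking rule and proving (\cref{lem:tiebreak,lem:aboveAndBelow,lem:easysplit}) that the roots of the blocking components of $U^B$ cannot carry $B$ in their Fitch sets, so that the flooding-back operation costs at most one new mutation. A further warning sign: if your equivalence held relative to a single fixed $F_2$, then a simple acyclicity count shows that at most $k-1$ components of $F_2$ can be adjacent to two or more $B$-components, which would prove a statement strictly stronger than the theorem (and stronger than what the authors say they can currently achieve); this strongly suggests that ``bad'' is not captured by multi-adjacency to $B$.

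Your third step inherits the same problem. The bound of $2(k-1)-1$ on the number of bad candidates ``across all admissible choices of $\bar{\chi}_2$'' is precisely the quantity that would require a careful argument, and ``a single super-region can contribute up to two distinct components on opposite sides of a $B$-pair path'' is an assertion, not a derivation; nothing in your sketch controls how the set of multiply-adjacent components varies as the extension varies. The paper obtains the threshold $2(k-1)$ from a different combinatorial statement (\cref{lem:blocked}): among $2(k-1)$ unique neighbours lying in the Steiner tree $F_2(B)$, at least one is either completely blocked by other members of $U^B$ on every path to a $B$-component (situation $\alpha$), or belongs to a degree-$2$ blocking pair (situation $\beta$). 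This is proved by induction on $k$ via a macro-edge count, and it is for these \emph{blocked} components --- not for components failing multi-adjacency --- that the good-pair property is then established in \cref{lem:alphagood} and \cref{lem:betagood}, after first discharging the easy cases via \cref{obs:sandwich} and \cref{observation:split}. To rescue your approach you would need to prove your step-2 converse or replace it with the blocking analysis; as written, the argument does not go through.
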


We subsequently use this in Section \ref{sec:Bounding-number-of-states} to prove our main result, \cref{thm:improved-bounded-states}, which improves upon Theorem \ref{thm:original}:

\begin{restatable}[Improved Bounded States Theorem]{theorem}{improvedBoundedStates}\label{thm:improved-bounded-states}
    Any pair $(T_1,T_2)$ of $X$-trees with $d_{\mathrm{MP}}(T_1,T_2) \geq 1$ admits an optimal convex character with at most $2 d_{\mathrm{MP}}(T_1,T_2)$ states.
\end{restatable}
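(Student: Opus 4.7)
The plan is to let $\chi$ be an optimal convex character, convex on $T_1$ say, with the minimum possible number of states $s$, and to show $s \leq 2k$ where $k = d_{\mathrm{MP}}(T_1,T_2)$. By minimality of $s$, $\chi$ admits no good pairs relative to any most parsimonious extension, since merging a good pair would produce an optimal convex character with strictly fewer states, contradicting the minimality.

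Let $\bar{\chi}_1$ be the essentially unique most parsimonious extension of $\chi$ to $T_1$ (forced by convexity), with induced forest $F_1$. Choose $\bar{\chi}_2$ via Fitch's algorithm using the canonical tie-breaking rule alluded to in the preliminaries (which discourages the creation of unique states), and let $F_2$ be its induced forest. Let $u$ and $r$ be the numbers of states that are $F_2$-unique and $F_2$-repeating, so that $s = u + r$. Since $\ell(T_2,\chi) = \ell(T_1,\chi) + k = s - 1 + k$, the forest $F_2$ has exactly $s + k$ components, whence $\sum_{B \text{ rep.}}(m_B - 1) = k$, where $m_B \geq 2$ is the multiplicity of a repeating state $B$ in $F_2$.

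Applying the contrapositive of \cref{thm:moonshot} to each repeating $B$: since no good pair exists, at most $2m_B - 3$ of the $F_1$-neighbours of $B$ can be $F_2$-unique. Summing over all repeating states, the number of $F_1$-edges joining an $F_2$-unique state to an $F_2$-repeating state satisfies
\[ |E_1| \;\leq\; \sum_{B \text{ rep.}} (2m_B - 3) \;=\; 2(r+k) - 3r \;=\; 2k - r. \]

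The remaining step — and, I expect, the main obstacle — is to show that every $F_2$-unique state has at least one $F_1$-neighbour that is $F_2$-repeating, equivalently, that no two $F_2$-unique states are $F_1$-adjacent. Granted this, each $F_2$-unique state contributes at least one edge to $E_1$, so $u \leq |E_1| \leq 2k - r$ and hence $s = u + r \leq 2k$, as desired. The natural route is to prove that if $A$ and $A'$ were both $F_2$-unique and $F_1$-adjacent, then $(A,A')$ would form a good pair (contradicting minimality): since $\ell(T_1,\chi_{A \rightarrow A'}) = s - 2$ is immediate from convexity, and $\ell(T_2,\chi_{A \rightarrow A'}) \leq \ell(T_2,\chi) - 1$ follows from $k$ being the maximum distance, the task reduces to ruling out $\ell(T_2,\chi_{A \rightarrow A'}) \leq \ell(T_2,\chi) - 2$. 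I expect that this ``cascading'' scenario — where merging two $F_2$-unique components propagates through intermediate components and saves more than one mutation on $T_2$ — is precisely what the canonical tie-breaking in the construction of $\bar{\chi}_2$ is engineered to forbid, via a local Fitch-based analysis on $T_2$ showing that $A$ and $A'$ must in fact be $F_2$-adjacent and that the ``relabel-only'' extension of $\chi_{A \rightarrow A'}$ is already most parsimonious.
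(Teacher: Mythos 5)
Your proposal follows essentially the same route as the paper: minimality of the state count forces the absence of good pairs, the contrapositive of \cref{thm:moonshot} bounds the number of $F_2$-unique $F_1$-neighbours of each repeating state $B$ by $2m_B-3$, and the identity $\sum_B (m_B-1)=k$ closes the count, with your algebra ($u\leq 2k-r$, hence $s=u+r\leq 2k$) being a rearrangement of the paper's. The one step you leave open --- that no two $F_2$-unique states can be $F_1$-adjacent --- is precisely \cref{obs:3.3boes2016} (Observation 3.3 of Boes et al.), which the paper simply cites, and contrary to your expectation it requires neither the tie-breaking rule nor any cascading analysis: two $F_2$-unique states occupy vertex-disjoint connected subtrees of $T_2$, each containing all taxa of its state, so some edge of $T_2$ separates the two taxon sets and \cref{observation:split} immediately yields the good pair.
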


Next, in Section \ref{sec:mareike} we show in Proposition \ref{prop:dmpStatesNotEnough} that for every $k \geq 1$ there exists a pair of trees $T_1, T_2$ that have $d_{\mathrm{MP}}(T_1,T_2) = k$ and such that every optimal convex character for $(T_1, T_2)$ has at least $k+1$ states. Section \ref{sec:experiment} analyses these lower and upper bounds from an empirical angle. Finally, in Section \ref{sec:future} we present concluding discussions.

\section{The adjacency theorem}
\label{sec:adjacency}

In this section we assume that $\chi$ is an optimal convex character on $(T_1, T_2)$, $\chi$ is convex
on $T_1$ and that $F_1$ is the forest obtained from an arbitrary most parsimonious extension
$\bar{\chi}_1$ of $\chi$ to $T_1$. For $T_2$ we need to be more careful. We subdivide an arbitrary
edge of $T_2$ with a root $r$, and run the bottom-up phase of Fitch's algorithm on this tree. We let
$\Bar{\chi}_2$ be the most parsimonious extension of $\chi$ to $T_2$ obtained by running the top-down phase of Fitch's
algorithm with the following tie-breaking (TB) rule, and let $F_2$ be the forest obtained from $\Bar{\chi}_2$.\\
\\
    \textbf{TB rule:} When processing a node $v$ and we have an arbitrary
choice of state from $\Phi(v)$\footnote{Recall that this can only occur at the root, or at the head of a newly created mutation edge.}, if possible choose a state $s \in \Phi(v)$ that does \underline{not} cause $s$ to
become a unique state.\\ 
\\

Observe that due to the way Fitch's algorithm operates, ``$s$ becomes a unique state'' is in the context of the TB rule equivalent to: $s$ is in $\Phi(v)$,
all the taxa labelled $s$ are in the subtree $T(v$),
and the subtree of $T(v)$ induced by nodes $u$ with $s \in \Phi(u)$ is connected and includes all taxa labelled $s$.

We start by recalling two results from \cite{boes2016linear}.

\begin{observation}[Observation 3.3 in \cite{boes2016linear}]\label{obs:3.3boes2016}
 Let $A,B$ be two distinct states that are $F_1$ adjacent and such that both are unique. Then $(A, B)$ is a good pair.
\end{observation}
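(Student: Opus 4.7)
The plan is to verify the two defining properties of a good pair: (i) $\chi_{A\to B}$ is convex on $T_1$ with $\ell(T_1, \chi_{A\to B}) = \ell(T_1, \chi) - 1$, and (ii) $\chi_{A\to B}$ remains optimal, i.e.\ $\ell(T_2, \chi_{A\to B}) = \ell(T_2, \chi) - 1$.

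Condition (i) is almost immediate from the hypothesis. Since $\chi$ is convex on $T_1$, the components of $F_1$ stand in bijection with the states of $\chi$ and are joined by $|\chi|-1$ mutation edges forming the tree $G(F_1)$. By $F_1$-adjacency of $A$ and $B$, exactly one mutation edge connects their components. Relabeling $A := B$ fuses those two components into one and eliminates that single edge, giving $\ell(T_1, \chi_{A\to B}) \leq \ell(T_1,\chi)-1 = |\chi_{A\to B}| - 1$; the generic lower bound then forces equality and convexity on $T_1$. For (ii), the upper bound $\ell(T_2, \chi_{A\to B}) \leq \ell(T_2, \chi) - 1$ is automatic: substituting $\ell(T_1, \chi_{A\to B}) = \ell(T_1, \chi) - 1$ into the defining inequality $|\ell(T_1, \chi_{A\to B}) - \ell(T_2, \chi_{A\to B})| \leq d_{\mathrm{MP}}(T_1,T_2)$ and using the optimality of $\chi$ (so $\ell(T_2,\chi) - \ell(T_1,\chi) = d_{\mathrm{MP}}(T_1,T_2)$) yields the inequality.

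The whole difficulty is in establishing the matching lower bound $\ell(T_2, \chi_{A\to B}) \geq \ell(T_2, \chi) - 1$. I would argue this by converting an arbitrary extension $\bar{\chi}'$ of $\chi_{A\to B}$ on $T_2$, with $q$ mutations, into an extension of $\chi$ with at most $q+1$ mutations, which then yields $\ell(T_2,\chi) \leq \ell(T_2, \chi_{A\to B}) + 1$. The conversion reinstates the $A/B$ distinction inside the merged components of $\bar{\chi}'$: vertices of $T_2$ assigned the merged state split into connected merged-components, each containing some leaves originally labeled $A$ or $B$ by $\chi$. An unmixed merged-component can be relabeled wholesale to the correct one of $\{A,B\}$ at no cost, while a mixed component must pay at least one internal mutation to separate $A$-leaves from $B$-leaves. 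Everything therefore reduces to bounding the total splitting cost over all mixed components by $1$.

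The main technical obstacle, and the only place the hypothesis that $A$ and $B$ are both unique in $F_2$ actually enters, is showing that this total cost cannot exceed $1$. I would pursue this via a Fitch-based analysis on $T_2$: uniqueness of $A$ (resp.\ $B$) in $F_2$ under the TB rule forces the $A$-subtree (resp.\ $B$-subtree) of $\bar{\chi}_2$ to be a single connected piece and tightly constrains the set of nodes $v$ with $A \in \Phi(v)$ (resp.\ $B \in \Phi(v)$). These constraints should rule out any MP extension of $\chi_{A\to B}$ in which the $A$- and $B$-taxa are distributed across several mixed merged-components, and should force the unique mixed merged-component to have its $A$- and $B$-leaves separable by a single cut, so that the internal splitting cost is exactly one. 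Combined with the upper bound, this gives $\ell(T_2, \chi_{A\to B}) = \ell(T_2, \chi) - 1$ and completes the verification that $(A,B)$ is a good pair; the hard part is turning the informal localization intuition into a rigorous structural lemma about the merged components.
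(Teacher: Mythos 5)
Your parts (i) and the upper bound $\ell(T_2,\newchi)\leq \ell(T_2,\chi)-1$ are correct and match the standard argument. But the entire content of the observation is the matching lower bound $\ell(T_2,\newchi)\geq \ell(T_2,\chi)-1$, and there your proposal stops at a plan rather than a proof: the claims that the Fitch sets $\Phi(v)$ are ``tightly constrained'', that this ``should rule out'' multiple mixed merged-components, and that the one mixed component ``should'' be separable by a single cut are exactly the statements that need proving, and you acknowledge as much in your last sentence. Moreover, the route you sketch is misdirected: the TB rule plays no role here (it is introduced only for \cref{thm:moonshot}), and uniqueness of a state in $F_2$ is by definition the statement that it occupies a single component of $F_2$ --- it is not a consequence of any tie-breaking.

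The missing idea is short. Since $A$ and $B$ are both unique in $F_2$, the single $A$-component and the single $B$-component of $F_2$ are two vertex-disjoint connected subtrees of $T_2$ containing all of $X_A$ and all of $X_B$ respectively; any edge of $T_2$ on the path between these two subtrees therefore generates a split $P|Q$ with $X_A\subseteq P$ and $X_B\subseteq Q$. At that point Observation~\ref{observation:split} --- which the paper states immediately afterwards and which your proposal never invokes --- gives that $(A,B)$ is a good pair. (Your ``convert an extension of $\newchi$ into an extension of $\chi$ at cost at most one'' machinery is essentially the proof of Observation~\ref{observation:split} itself: every mixed merged-component must contain the separating edge $e$, so there is at most one such component, and cutting it at $e$ costs exactly one mutation.) Note that the paper does not reprove this observation --- it is quoted from \cite{boes2016linear} --- so the comparison here is against that standard two-step reduction, which your attempt does not reach.
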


Given an $X$-tree $T$
and an edge $e$ of $T$, deleting $e$ breaks $T$ into two connected components and this naturally induces a bipartition $P|Q$ of $X$. We say then
that $P|Q$ is the \emph{split generated in $T$ by $e$}.

\begin{observation}[Lemma 3.2 in \cite{boes2016linear}]
\label{observation:split}
Let $A$ and $B$ be two distinct states that are $F_1$-adjacent and let $X_{A}, X_{B} \subseteq X$ be the taxa that
are labeled with $A, B$ respectively. Suppose that in $T_2$, there exists an edge $e$ that generates a split $P|Q$, where $X_A \subseteq P$ and $X_{B} \subseteq Q$.
Then $(A, B)$ is a good pair. 
\end{observation}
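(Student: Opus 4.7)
The plan is a proof by contradiction: assume that no $A \in U^B$ forms a good pair with $B$, and use the hypothesis $|U^B| \geq 2(k-1)$ together with the tie-breaking rule governing $\bar{\chi}_2$ to derive a contradiction.

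First I would recast failure of good-pair status as a topological condition on $G(F_2)$. By Observation \ref{observation:split}, if some edge of $T_2$ induces a split with $X_A \subseteq P$ and $X_B \subseteq Q$ then $(A,B)$ is a good pair. Since $A$ is unique in $F_2$, all of $X_A$ sits inside the single component $C_A$; absence of such a separating edge then forces $C_A$, viewed as a vertex of $G(F_2)$, to lie on the Steiner tree $S \subseteq G(F_2)$ spanning the $k$ components labelled $B$, and moreover to be an interior vertex of $S$ (leaves of $S$ are necessarily $B$-components by minimality). Equivalently, at least two of the $G(F_2)$-neighbours of $C_A$ lead along disjoint paths to distinct $B$-components of $F_2$.

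The heart of the proof is then to use the TB rule to cap how many unique components from $U^B$ can simultaneously occupy interior positions of $S$. For each $A \in U^B$, let $v_A$ be the Fitch node at which the top-down phase picked state $A$; since $C_A$ is interior to $S$, $B$-labelled taxa are reachable from $v_A$ both inside and outside the subtree $T_2(v_A)$, so $B$ populates the bottom-up sets $\Phi$ at and around $v_A$. Through a case analysis on the local structure of $S$ at $C_A$ (the degree of $C_A$ in $S$, whether its $G(F_2)$-neighbours are $B$-components, and whether they themselves belong to $U^B$) I would show that, unless $C_A$ occupies a narrow ``boundary'' configuration of $S$ immediately adjacent to a $B$-component, the TB rule at $v_A$ would have forced us to pick a state other than $A$ --- typically $B$ itself, arriving from the opposite side of the bridge --- thereby keeping $A$ non-unique and contradicting our choice of $\bar{\chi}_2$. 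Counting the boundary configurations on $S$ and combining with Observation \ref{obs:3.3boes2016} (which directly furnishes a good pair whenever two members of $U^B$ are themselves $F_1$-adjacent), one then obtains a bound on the number of admissible interior placements that is strictly smaller than $|U^B|\geq 2(k-1)$, giving the desired contradiction.

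The main obstacle is this local-to-global step: translating the TB rule --- imposed at a single Fitch node using only local $\Phi$ information --- into a structural statement forbidding interior placements on $S$. For every forbidden configuration I must exhibit a concrete alternative top-down choice at $v_A$ (or at $v_A$'s parent) that avoids creating any new unique state, remains consistent with the bottom-up $\Phi$ sets, and does not increase $\ell(T_2,\chi)$. Configurations at branch points of $S$ of degree at least three, and stretches of $S$ in which several $U^B$-components appear consecutively, are where I expect the delicate case-work --- and the tightness of the bound $2(k-1)$ --- to really be decided.
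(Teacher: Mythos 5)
Your proposal does not address the statement you were asked to prove. The target here is Observation~\ref{observation:split} itself: if $A$ and $B$ are $F_1$-adjacent and some edge $e$ of $T_2$ generates a split $P|Q$ with $X_A \subseteq P$ and $X_B \subseteq Q$, then $(A,B)$ is a good pair. Instead, you have sketched an argument for the Adjacency Theorem (Theorem~\ref{thm:moonshot}) --- the hypotheses you invoke ($|U^B| \geq 2(k-1)$, the multiplicity $k$ of $B$ in $F_2$, the TB rule governing $\bar{\chi}_2$) belong to that theorem and appear nowhere in the statement at hand --- and in the course of that sketch you explicitly cite Observation~\ref{observation:split} as a known tool. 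So relative to the assigned task the argument is vacuous: the statement to be proved is assumed, and a different (much harder) statement is argued instead. Note also that Observation~\ref{observation:split} makes no reference to Fitch's algorithm or to any particular choice of $\bar{\chi}_2$, so none of the machinery you deploy is even available in its proof.

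For the record, the intended argument is short and elementary. Relabelling $A := B$ yields $\newchi$ with $\ell(T_1,\newchi) = \ell(T_1,\chi) - 1$ by $F_1$-adjacency, and optimality of $\chi$ forces $\ell(T_2,\newchi) \leq \ell(T_2,\chi) - 1$. For the reverse inequality, take any most parsimonious extension of $\newchi$ to $T_2$ and, on the $P$-side of $e$ only, change every vertex assigned $B$ back to $A$. Because $X_A \subseteq P$ and $X_B \subseteq Q$, this produces a valid extension of $\chi$, and the only edge on which a new mutation can appear is $e$ itself; hence $\ell(T_2,\chi) \leq \ell(T_2,\newchi) + 1$. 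The two inequalities give $\ell(T_2,\newchi) = \ell(T_2,\chi) - 1$, i.e., $(A,B)$ is a good pair. No global structure of $G(F_2)$, no Steiner tree, and no tie-breaking analysis is needed.
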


\moonshot*

\begin{proof}
    Recall that $F_1$ and $F_2$ can be viewed as trees where the nodes are components, and the edges are mutations. With this view in mind, let $F_2(B)$ be the unique minimal subtree of $F_2$ that spans the $B$ components in $F_2$. Every leaf of the tree $F_2(B)$ is therefore a $B$ component, but it can happen that interior nodes of $F_2(B)$ are also $B$ components. Note also that the degree of components in $F_2(B)$ can be lower than in $F_2$: this occurs when a component in $F_2(B)$ is adjacent to one or more components in $F_2$ that are not in $F_2(B)$. See Figure \ref{fig:inducedByB}.

\begin{figure}[ht]      \centering\vspace{0.5cm} 
   \includegraphics[width=3cm]{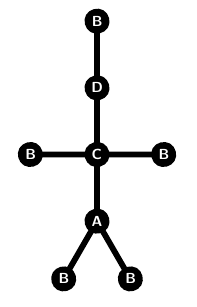} 
    \caption{The graph $F(B)$ for the forest $F$ shown in Figure \ref{fig:intro}. Note how the $E$ component and one of the $C$ components do not appear in $F(B)$, because they do not lie on the subtree of $G(F)$ that spans the $B$ components. Note also that in $F(B)$ component $A$ has degree 3, but in $G(F)$ it had degree 5. %
    }
    \label{fig:inducedByB}
  \end{figure}

As part of the proof we state and prove a number of intermediate lemmas and observations. To emphasize that these are `detours' within a larger argument we end their proofs with $\blacksquare$ rather than $\qedsymbol$.

    \begin{observation}
        If there is a state $A \in U^B$ that is not in $F_2(B)$, or which contains only one taxon, then $(A,B)$ is a good pair.
        \label{obs:sandwich}
    \end{observation}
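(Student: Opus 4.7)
In both cases the strategy is the same: exhibit an edge of $T_2$ whose induced split separates $X_A$ from $X_B$, and then invoke Observation~\ref{observation:split}.

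The single-taxon case is immediate. If $X_A = \{x\}$, the pendant edge incident to the leaf $x$ in $T_2$ generates the split $\{x\} \mid X \setminus \{x\}$; since $A \neq B$, every $B$-taxon lies on the $X \setminus \{x\}$ side, so Observation~\ref{observation:split} yields that $(A,B)$ is a good pair.

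For the case where $A$ is not in $F_2(B)$, I would expand the abstract subtree $F_2(B) \subseteq G(F_2)$ to a concrete subtree $S \subseteq T_2$ by taking the union, as subtrees of $T_2$, of those components of $F_2$ that appear as vertices of $F_2(B)$, together with the mutation edges joining adjacent such components. By construction $S$ contains every $B$-taxon; and since no $A$-component is a vertex of $F_2(B)$, no $A$-taxon lies in $S$. Hence every $A$-taxon sits in one of the hanging subtrees obtained by removing $S$ from $T_2$, each of which meets $S$ through a single edge. If all of $X_A$ lies in a common hanging subtree $H$, the edge joining $H$ to $S$ generates a split separating $X_A$ from $X_B$, and Observation~\ref{observation:split} again completes the proof.

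The main obstacle is ruling out the scattered configuration where $X_A$ is split across two or more hanging subtrees of $T_2 \setminus S$. In that scenario $F_2$ would contain at least two distinct $A$-components whose connecting path in $G(F_2)$ must cross $F_2(B)$ and pass through some component of $S$. I would aim for a contradiction by exploiting how $\bar{\chi}_2$ was produced: using Lemma~\ref{lem:fitchoptimal} applied to an appropriate rooted subtree of $S$, one should be able to identify an internal node $v$ of $S$ whose Fitch set contains $A$, and then argue via the TB rule that $\bar{\chi}_2$ ought to have chosen $A$ at $v$, since doing so would have merged the two would-be $A$-components into a single component without producing any fresh unique state. This would contradict either the parsimony of $\bar{\chi}_2$ (if the rewrite strictly lowers the mutation count) or the tie-breaking that defined $\bar{\chi}_2$ (if the count is unchanged but the merged extension respects TB strictly better). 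Pinning down this last step is where the real work lies; everything preceding it is a mechanical appeal to Observation~\ref{observation:split}.
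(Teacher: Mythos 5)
Your single-taxon case and your reduction of the other case to ``find one edge of $T_2$ separating $X_A$ from $X_B$ and invoke Observation~\ref{observation:split}'' are exactly the paper's (one-line) argument. But the ``main obstacle'' you identify --- the scattered configuration in which $X_A$ is split across two or more hanging subtrees of $T_2 \setminus S$ --- is not an obstacle at all, and the Fitch/TB machinery you sketch to rule it out is not needed. By definition, every state in $U^B$ is a \emph{unique} state, and (per Definition~\ref{def:unique-and-repeated-states}, read with respect to $F_2$, which is the only forest in which the unique/repeating distinction is non-vacuous here, since convexity on $T_1$ makes every state unique in $F_1$) this means $A$ is assigned to exactly one component of $F_2$. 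So all of $X_A$ lies in a single connected subtree of $T_2$ that is vertex-disjoint from your $S$, hence in a single hanging subtree $H$, and the edge joining $H$ to $S$ already gives the required split. Indeed, this is the only reading under which the phrase ``a state $A \in U^B$ that is not in $F_2(B)$'' in the statement is even well-posed: it identifies $A$ with its one component of $F_2$ and asks whether that component is a vertex of the Steiner tree $F_2(B)$.

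So the proposal is not wrong, but it is incomplete as written (you explicitly defer the step where ``the real work lies''), and the completion is definitional rather than substantive: two distinct $A$-components in $F_2$ would contradict $A$ being unique, full stop. Once you note this, your argument collapses to the paper's: in both cases an edge of $T_2$ separates all $A$-taxa from all $B$-taxa (the pendant edge at the lone $A$-taxon, or the mutation edge on the path in $G(F_2)$ from the $A$-component to the connected subtree $F_2(B)$), and Observation~\ref{observation:split} finishes.
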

    \begin{proof}
 \renewcommand{\qedsymbol}{$\blacksquare$}
In both cases there exists an edge in $T_2$ separating all the $A$ taxa from all the $B$ taxa. So $(A, B)$ is a good pair by Observation \ref{observation:split}.
    \end{proof}
 Observation \ref{obs:sandwich} is simple but of crucial importance. It means that, in the remainder of the proof, every state in $U^B$ is in $F_2(B)$, and contains at least two taxa (otherwise there remains nothing to show).

A simple path $P$ that starts in $F_2(B)$ at a component $A \in U^B$, ends at a $B$ component,
and which has no $B$ components as its interior nodes, is called a $B$-\emph{reaching path}.

A $B$-reaching path is
\emph{blocked} (by $C$) if there is at least one component $C\in U^B$, $C \neq A$, also on path $P$.

\begin{figure}[ht]      \centering\vspace{0.3cm} 
    \begin{tikzpicture}
\sffamily
\tikzset{every node/.style={draw, fill, circle, minimum size= 3pt, text=white}, xthick/.style ={line width = 1mm }}
    \node[] (a) at (0,0) {A};
    \node[] (c1) at (3,1) {C$_1$};
    \node[] (c2) at (3,-1) {C$_2$};
    \node[] (c3) at (-3,1) {C$_4$};
    \node[] (c4) at (-3,-1) {C$_3$};

    \path[] (c1) -- coordinate[midway](m1) (a);
    \path[] (c1) -- coordinate[midway](q11) (m1);
    \path[] (m1) -- coordinate[midway](q12) (a);
    
    \path[] (c2) -- coordinate[midway](m2) (a);
    \path[] (c2) -- coordinate[midway](q21) (m2);
    \path[] (m2) -- coordinate[midway](q22) (a);

    \path[] (c3) -- coordinate[midway](m3) (a);
    \path[] (c3) -- coordinate[midway](q31) (m3);
    \path[] (m3) -- coordinate[midway](q32) (a);

    \path[] (c4) -- coordinate[midway](m4) (a);
    \path[] (c4) -- coordinate[midway](q41) (m4);
    \path[] (m4) -- coordinate[midway](q42) (a);

    \draw[xthick] (a) -- (q12);
    \draw[xthick] (a) -- (q22);
    \draw[xthick] (a) -- (q32);
    \draw[xthick] (a) -- (q42);

    \draw[xthick, dashed] (q11) -- (q12);
    \draw[xthick, dashed] (q21) -- (q22);
    \draw[xthick, dashed] (q31) -- (q32);
    \draw[xthick, dashed] (q41) -- (q42);

    \draw[xthick] (c1) -- (q11);
    \draw[xthick] (c2) -- (q21);
    \draw[xthick] (c3) -- (q31);
    \draw[xthick] (c4) -- (q41);

    \coordinate (c1o1) at (3,1.75);
    \coordinate (c1o2) at (3.75,1);
    \coordinate (c1o3) at (3.5,1.5);
    \draw[xthick] (c1) -- (c1o1);
    \draw[xthick] (c1) -- (c1o2);
    \draw[xthick] (c1) -- (c1o3);

    \coordinate (c2o3) at (3.5,-1.5);
    \draw[xthick] (c2) -- (c2o3);

    \coordinate (c3o2) at (-3.75,1);
    \coordinate (c3o3) at (-3.5,1.5);
    \draw[xthick] (c3) -- (c3o2);
    \draw[xthick] (c3) -- (c3o3);

    \coordinate (c4o1) at (-3,-1.75);
    \coordinate (c4o2) at (-3.75,-1);
    \coordinate (c4o3) at (-3.5,-1.5);
    \coordinate (c4o4) at (-2.5,-1.5);
    \draw[xthick] (c4) -- (c4o1);
    \draw[xthick] (c4) -- (c4o2);
    \draw[xthick] (c4) -- (c4o3);
    \draw[xthick] (c4) -- (c4o4);

    \node[draw=none, fill=none, text=black, rectangle] (text1) [right of=c1o3]{Towards B};
    \node[draw=none, fill=none, text=black, rectangle] (text2) [right of=c2o3]{Towards B};
    \node[draw=none, fill=none, text=black, rectangle] (text3) [left of=c3o3]{Towards B};
    \node[draw=none, fill=none, text=black, rectangle] (text4) [left of=c4o3]{Towards B};
    
\end{tikzpicture}
    \caption{Situation ($\alpha$): unique state $A \in U^B$ is completely blocked. This means that each path in $F_2(B)$ from $A$ to a $B$ component must pass through $C_1, C_2, C_3$ or $C_4$, where $\{C_1, C_2, C_3, C_4\} \subset U^B$. The intuition is that after relabelling the $A$ taxa to $B$, the parsimony score in $T_2$ will not drop by too much due to the $C_i$ interrupting paths from the original $A$ component to the $B$ components.} \label{fig:fullyblocked}
  \end{figure}
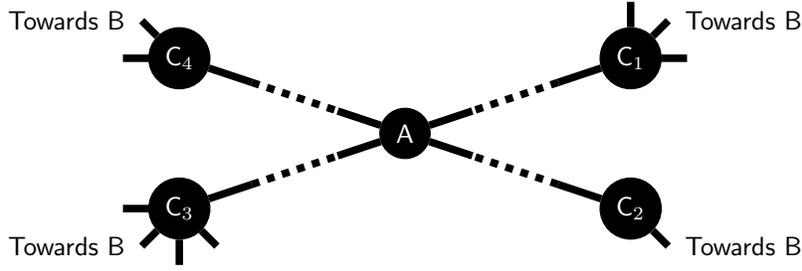

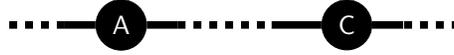
\begin{figure}[ht]      \centering\vspace{0.5cm} 
   \begin{tikzpicture}
\sffamily
\tikzset{every node/.style={draw, fill, circle, minimum size= 4pt, text=white}, xthick/.style ={line width = 1mm }}
    \node[draw, circle, minimum size=4pt] (a) at (0,0) {A};
    \node[draw, circle, minimum size=4pt] (c) at (3,0) {C};
    
    \draw[xthick] (a) -- (0.75,0);
    \draw[xthick] (-0.75,0) -- (a);
    \draw[xthick, dashed] (-0.75,0) -- (-1.5,0);
    
    \draw[xthick] (c) -- (2,0);
    \draw[xthick] (3.75,0) -- (c);
    \draw[dashed, xthick] (4.5,0) -- (3.75,0);
    
    \draw[dashed, xthick] (0.75,0) -- (2,0);
\end{tikzpicture}
    \caption{Situation ($\beta$): $A$ and $C$ are both in $U^B$, both have
    degree-2 in $F_2(B)$, and all components between them (if they exist) also have degree 2 in $F_2(B)$. Also, none of the components between $A$ and $C$ are $B$ components. The intuition is that after relabelling the $A$ taxa to $B$, the $C$ component will interrupt``rightwards'' paths from the original $A$ component to $B$ components on the right. ``Leftward'' paths from the original $A$ component to $B$ components on the left might be possible, but only one mutation can be saved in that direction due to $A$ having degree-2.} \label{fig:deg2blocked}
  \end{figure}

\begin{lem}
\label{lem:blocked}
    In $F_2(B)$, at least one of the following two situations holds:
    \begin{enumerate}
   \item[$\alpha$.] There exists $A \in U^B$ such that all $B$-reaching paths that start at $A$ are blocked. We call such an $A$ \emph{completely blocked}. See Figure \ref{fig:fullyblocked}.
    \item[$\beta$.] There exists
    distinct $A, C \in U^B$ such that $A$ and $C$ both have degree 2 in $F_2(B)$ and all components between them in $F_2(B)$ (if they exist) also have degree 2 in $F_2(B)$, and none of them are $B$ components. See Figure \ref{fig:deg2blocked}.
   \end{enumerate}
    \label{lem:sufficientlyblocked}
\end{lem}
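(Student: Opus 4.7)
The plan is to argue by contradiction: assume neither $(\alpha)$ nor $(\beta)$ holds, and show this forces $|U^B|\le 2k-3$, contradicting $|U^B|\ge 2(k-1)$.

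First I would pass to the \emph{skeleton} $F_2(B)'$ of $F_2(B)$, obtained by suppressing every degree-$2$ non-$B$ vertex. Its vertices are the $k$ $B$-components together with the non-$B$ vertices of degree $\ge 3$ in $F_2(B)$, and its edges are precisely the corridors of $F_2(B)$. Call a non-$B$ degree-$\ge 3$ vertex \emph{brown} if it is a $U^B$-component and \emph{gray} otherwise, with counts $m$ and $g$. Since $(\beta)$ fails, every corridor contains at most one $U^B$-vertex in its interior; call such a corridor \emph{unsafe} (there are $n_2$ of them, one per degree-$2$ $U^B$) and the remainder \emph{safe}. Thus $|U^B|=m+n_2$. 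The classical identity $\sum_v(\deg(v)-2)=-2$, combined with the fact that the leaves of $F_2(B)$ are $B$-components, yields $m+g\le k-2$.

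Next consider the forest $F_2(B)'_{\mathrm{safe}}:=F_2(B)'-\{\text{unsafe corridors}\}$, which has $n_2+1$ connected components that I call \emph{subtrees}. Call a subtree \emph{blue} if it contains a $B$-component and \emph{red} otherwise; let $N_b,N_r$ be their counts. The heart of the proof — and the step I expect to be the main obstacle — is the following consequence of $(\alpha)$ failing: every brown vertex lies in a blue subtree, and no unsafe corridor has both endpoints in red subtrees. For the first part, suppose brown $v$ lies in a red subtree $T_j$. Safe corridors stay within $T_j$; moreover every region of $F_2(B)-U^B$ is contained in a single subtree, since crossing an unsafe corridor requires passing through its interior $U^B$. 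Hence every region meeting $T_j$ is $B$-free, so every gray vertex in $T_j$ lies in a red region. Thus no safe corridor from $v$ ends at a $B$ or at a gray-in-a-blue-region, while every unsafe corridor from $v$ is first blocked by the $U^B$ in its interior. So $v$ is completely blocked, contradicting $(\alpha)$ failing. The second part is analogous, applied to the degree-$2$ $U^B$-vertex inside an unsafe corridor whose endpoints both lie in red subtrees.

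Finally, the counting. By the structural claim every red subtree is nonempty and consists only of gray vertices, so $g\ge N_r$; every blue subtree contains at least one $B$-component, so $N_b\le k$ and hence $n_2+1=N_b+N_r\le k+N_r$. Substituting,
\[
|U^B|=m+n_2\;\le\; m+k-1+N_r\;\le\; k+(m+g)-1\;\le\; k+(k-2)-1\;=\;2k-3,
\]
contradicting $|U^B|\ge 2(k-1)$. Hence at least one of $(\alpha)$, $(\beta)$ must hold.
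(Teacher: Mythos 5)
Your proposal is correct, but it takes a genuinely different route from the paper. The paper proves Lemma \ref{lem:sufficientlyblocked} by induction on the number $k$ of $B$ components: it separately handles the cases where $F_2(B)$ has an interior $B$ component (splitting there) or not (splitting at a degree-2 $U^B$ component on an interior macro-edge), each time re-attaching a dummy $B$ leaf and doing careful bookkeeping of how many $U^B$ components each piece inherits. You instead give a direct, non-inductive counting argument: contract to the corridor skeleton, use $\sum_v(\deg(v)-2)=-2$ to get $m+g\le k-2$, cut the unsafe corridors to obtain $n_2+1$ subtrees, and observe that failure of $(\alpha)$ forces every red (i.e.\ $B$-free) subtree to consist solely of gray branch vertices, whence $N_r\le g$ and $|U^B|=m+n_2\le 2k-3$, contradicting $|U^B|\ge 2(k-1)$. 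I checked the chain of inequalities and the structural claim; both hold (a degree-2 $U^B$ vertex cannot be a leaf of $F_2(B)$ since all leaves are $B$ components, each corridor carries at most one interior $U^B$ vertex when $(\beta)$ fails, and any $B$-reaching path leaving a red subtree must fully traverse an unsafe corridor and is therefore blocked by its interior $U^B$ vertex). Your approach buys a shorter, global argument that avoids the delicate subtree-reassembly of the induction and handles $k=2$ uniformly rather than as a separate base case. Two cosmetic remarks: the digression about ``regions of $F_2(B)-U^B$'' is unnecessary and somewhat obscures the simple point that escaping a red subtree forces passage through an unsafe corridor's interior $U^B$ vertex; and part (b) of your structural claim (no unsafe corridor with both endpoints red) is never used in the final count and could be dropped.
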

\begin{proof}
\renewcommand{\qedsymbol}{$\blacksquare$}
We prove this by induction on the number of $B$ components $k\geq 2$ in $F_2(B)$.
For the base case, consider that for $k=2$, $F_2(B)$ consists of a path with two $B$ components, one at each end, and whereby all the components inbetween, including at least $2(k-1)=2$ distinct components $A \neq C$ from $U^B$, have degree 2 in $F_2(B)$. None of the components between $A$ and $C$ can be $B$ components, because $k=2$. Hence, we are in situation ($\beta$) and the base case holds. 

Beyond the base case, we have $k \geq 3$ and distinguish two cases: \begin{enumerate}\item[(i)]  $F_2(B)$ has no interior $B$ components; \item[(ii)] at least one interior component of $F_2(B)$ is a $B$-component. \end{enumerate}

We start with (i), i.e., only the leaves of $F_2(B)$ are $B$ components. Consider then $k \geq 3$. We assume by induction that the claim holds for all smaller $k$. First, suppose there are two distinct degree-2 components $A, C \in U^B$ 
such that only degree-2 components lie inbetween them (if any) in $F_2(B)$. The components between $A$ and $C$ cannot be $B$ components, because $B$ components are only at the leaves, so we are again in situation ($\beta$) and we are done.

Hence, we proceed by assuming that
whenever we have $A \neq C \in U^B$ such that both have degree 2,
there is at least one component on the path between them that has degree higher than 2 in $F_2(B)$; note that this cannot be a $B$ component because we are in case (i). 
Recall that a tree with $k$ leaves and no degree-2 nodes has at most $2k-3$ edges\ (of which $k$ are incident to leaves) and at most $k-2$ 
interior nodes, and both extremes are reached in the case of binary trees. Further, if we view maximal chains of degree-2 nodes as ``macro-edges'', we see that each macro-edge can have, due to the presumed non-applicability of ($\beta$), at most one component from $U^B$. Hence,
there can be at most $2k-3$ of the $U^B$ on macro-edges (all of which have degree 2).
We say that a macro-edge is a \emph{leaf} macro-edge if one of the components in it is adjacent to a leaf of $F_2(B)$ (where such a leaf will necessarily be a $B$ component). Now, if all of the $k$ leaf macro-edges
have a component from $U^B$, then let $A$ be any element of $U^B$ that is not on a leaf macro-edge. $A$ must exist, because otherwise $U^B$ contains
only $k$ components and $k < 2k-2$ for $k \geq 3$. It is easy to check that $A$ is completely blocked (by the $U^B$ components on the leaf macro-edges) and the induction claim holds. So we can assume
that at most $k-1$ components from $U^B$ are on leaf macro-edges. Combined with the observation that at most $k-2$ of the components of $U^B$ can be components of degree-3 or higher in $F_2(B)$, we see that
at most $(k-1)+(k-2)=2k-3$ components from $U^B$ can be on leaf macro-edges, or components of degree-3 or higher. There are at least $2k-2$ elements in $U^B$, so we conclude that at least one element
of $U^B$, let us call it $C$, has degree-2 and is on an \emph{interior} macro-edge (i.e. a macro-edge that is not a leaf macro-edge).

Specifically, due to being an interior macro-edge, the part of $F_2(B)$ ``left'' of $C$ contains at least two $B$ leaf components, and the part ``right'' of $C$ also contains at least two $B$ leaf components. We delete $C$, which disconnects $F_2(B)$ into two pieces, a ``left'' subtree and a ``right'' subtree; then, in each subtree, add $C$ back in its original location, followed by a new degree-1 $B$ component immediately adjacent to it (so in both subtrees $C$ has degree 2). Observe that if 
at least one of the left or right subtrees has a completely blocked component, i.e., if ($\alpha$) holds for the subtree, then so does the original $F_2(B)$. This is because, if $C$ has a blocking function in (say) the left subtree, it fulfills the same blocking function in $F_2(B)$. (Note that the completely blocked component cannot be $C$, due to the adjacency of $C$ to a $B$ component). On the other hand, suppose ($\beta$) holds for at least one of the left or right subtrees. Neither of the two degree-2 components of $U^B$ whose existence is guaranteed by ($\beta$) can be equal to $C$: if this was the case, then the macro-edge of $F_2(B)$ that contains $C$ would have already had a second degree-2 component from $U^B$ on it, with only degree-2 non-$B$ components inbetween them, contradicting the earlier assumption.
Hence, $(\beta)$ would also hold for $F_2(B)$. Summarizing: if we can show that the inductive claim holds for at least one of the subtrees, we are done for case (i). So, we need to argue that at least one of the left and right subtrees has enough components from $U^B$ in order to trigger the inductive claim. Towards this goal, let $b_l, b_r$ be the number of $B$ components in the left and right subtree, respectively, all of which are leaves. We have that $b_l + b_r = k+2$, $b_l < k$ and $b_r < k$, where the inequalities hold because, although we added one $B$ component to each of the two subtrees, each subtree lost all the at least 2 $B$ components from the other subtree.

Suppose neither the left subtree or the right subtree has enough components from $U^B$ to trigger the induction: this means that there are at most $2(b_l - 1) - 1 = 2b_l - 3$ such components in the left subtree and analogously  $2b_r-3$ in the right subtree. Given that $C$ is counted twice, there are at most $(2b_l -3) + (2b_r -3) - 1 = 2(b_l + b_r) - 7$ components in $U^B$. However, given that $b_l + b_r = k+2$, we conclude that there are at most $2(k+2)-7 = 2k-3$ elements in $U^B$: contradiction on the assumption that $U^B$ contains at least $2(k-1)$ components. Hence, recalling that  $b_l < k$ and $b_r < k$, we see that the inductive claim triggers for at least one of the left or right subtrees. This concludes the proof for the case when $F_2(B)$ has no interior $B$ components.

Next, we deal with case (ii), which is when $F_2(B)$ has at least one interior node that is a $B$ component. We call this interior
node $B'$ to disambiguate, and let $p \geq 2$ be its degree.
We split $F_2(B)$ into $p$ subtrees as follows: delete $B'$, break $F_2(B)$ into $p$ subtrees, and then add $B'$ back to each subtree in its original location. Each subtree has strictly fewer than $k$ $B$ components, because it loses at least one $B$ component from each of the other $(p-1) \geq 1$ subtrees. Similar to above, if we can show that at least one of these $p$ subtrees triggers the inductive claim we are done. That is because a component of $U^B$ that is completely blocked in a subtree is completely blocked in $F_2(B)$, and because two components of $U^B$ satisfying ($\beta$) in one of the subtrees, also satisfy it in $F_2(B)$.
To this end, let $b_1, ..., b_p$ be the number of $B$ components in the newly created subtrees. We have $\sum b_i = k + p - 1$ because $B'$ is overcounted $p-1$ times. Suppose that induction does not trigger for any of the subtrees: so each has at most $2b_i - 3$ elements from $U^B$. This would mean that there are $\sum (2b_i - 3)$ elements
in $U^B$ in total, which is $2(k + p - 1) - 3p = 2k - p - 2$. Given that $p \geq 2$, this is at most $2k-4$, yielding once again a contradiction on the original assumption that $U^B$ contained at least $2(k-1)$ components. This concludes the proof for case (ii) and thus of Lemma \ref{lem:sufficientlyblocked}.
\end{proof}

We are now ready to embark upon the final part of the proof of Theorem \ref{thm:moonshot}. The high-level idea is as follows. From Lemma \ref{lem:sufficientlyblocked} we know that at least one of situations ($\alpha$) and ($\beta$) holds. If ($\alpha$) holds, then we will argue that $(A, B)$ is a good pair, where $A$ is the
component of $U^B$ identified by ($\alpha$). The intuition behind this argument is that when the $A$ taxa are relabelled $B$, the resulting new most parsimonious extension has very little freedom to lower the parsimony score, because this would require the extension to connect the $A$ region of $T_2$ (which has now become $B$) to other $B$ components, via paths in the new most parsimonious extension that solely have state $B$. Such paths can exist, but they will necessarily ``cut through'' the components in $U^B$ that blocked the $B$-paths starting at $A$. We will argue that such a ``cutting through'' operation will necessarily induce new mutations elsewhere, neutralizing the advantage of slicing through the $U^B$ component in the first place, leading overall to a decrease of at most one in the parsimony score, i.e., that $(A, B)$ is indeed a good pair. Situation $(\beta)$ will be similar in spirit, but adapted somewhat.
For both $(\alpha)$ and $(\beta)$ the main difficulty is arguing that the ``cutting through'' operation definitely incurs a new mutation. We will leverage our special tie-breaking version of Fitch's algorithm for this.

Before continuing, recall the prevailing assumptions from the start of Section \ref{sec:adjacency}:
\begin{itemize}
    \item $\chi$ is an optimal convex character on ($T_1, T_2$);
    \item $\chi$ is convex on $T_1$;
    \item $F_1$ is the forest obtained from an arbitrary most parsimonious extension $\bar{\chi}_1$ of $\chi$ to $T_1$;
    \item $\Bar{\chi}_2$ was obtained by running the top-down phase of Fitch's algorithm with the TB rule, and $F_2$ and $F_2(B)$ were derived from this specially-selected most parsimonious extension.
    \end{itemize}

\paragraph{Leveraging Fitch's algorithm.} It is helpful at this point to view $F_2(B)$ not as components joined together by mutation edges, but rather as a subtree of $T_2$. Recall that, in order to execute Fitch's algorithm, we inserted a root $r$ into $T_2$ and all edges of $T_2$ are oriented away from the root. Note that $r$ does not necessarily lie on $F_2(B)$; it might lie elsewhere in $T_2$. Either way, it is well-defined to speak of the $F_2(B)$-\emph{root}: this is (viewing $F_2(B)$ as a subtree of $T_2$) the node of $F_2(B)$ that is closest to $r$, which will be $r$ itself if $r$ lies on $F_2(B)$. Next, for each $A \in U^B$, we define $r_A$ to be the node
 of the $A$ component that is closest to the $F_2(B)$-root.  We simply call this the \emph{root of} $A$. Informally, as we move downwards away from the $F_2(B)$-root, this is the first node that is allocated state $A$ by the most parsimonious extension we selected. The following lemma is extremely important.

\begin{lem}
\label{lem:tiebreak}
    Let $C$ be an arbitrary component from $U^B$, and $r_C$ its root. If there is at least one $B$-taxon in the subtree of $T_2$ rooted at $r_C$, and at least one $B$-taxon that is not inside this subtree, then $B \not \in \Phi(r_C)$, where as usual $\Phi(r_C)$ denotes the subset of states that Fitch's algorithm allocated node $r_C$ in its bottom-up phase when applied to $\chi$.
\end{lem}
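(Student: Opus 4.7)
The plan is to derive a contradiction from the TB rule by assuming $B \in \Phi(r_C)$ and showing that the TB rule would then have forbidden Fitch's algorithm from assigning $C$ to $r_C$. First I would establish that the state at $r_C$ was in fact set by an \emph{arbitrary} choice (as opposed to being forced by inheritance from the parent). Since by definition $\bar{\chi}_2(r_C) = C$ and $r_C$ is the topmost $C$-labelled node of $T_2$, either $r_C = r$ (so Fitch chose $C$ arbitrarily at the root) or $r_C$ has a parent $p$ with $\bar{\chi}_2(p) \neq C$. In the latter case, if $\bar{\chi}_2(p) \in \Phi(r_C)$ then by the top-down rule $\bar{\chi}_2(r_C) = \bar{\chi}_2(p) \neq C$, contradicting $\bar{\chi}_2(r_C) = C$. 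So $\bar{\chi}_2(p) \notin \Phi(r_C)$, meaning edge $(p, r_C)$ is a mutation edge and the value at $r_C$ was picked arbitrarily from $\Phi(r_C)$, at which point the TB rule applied.

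Next I would unpack what the TB rule tells us. Since $C \in U^B$, the state $C$ is unique in $F_2$, and by picking $C$ at $r_C$ Fitch clearly \emph{caused} $C$ to become unique (the remainder of the $C$-component is grown deterministically from $r_C$ downward, and no other component receives $C$). The ``if possible'' clause of the TB rule then forces the conclusion that \emph{every} candidate state $s \in \Phi(r_C)$ would have become unique if chosen at $r_C$; otherwise the TB rule would have selected such a non-unique-causing $s$ instead of $C$. Applying the characterization of ``becomes unique'' stated immediately after the TB rule, this means each $s \in \Phi(r_C)$ satisfies: all $s$-taxa lie inside the subtree $T_2(r_C)$, and the subtree of $T_2(r_C)$ induced by the nodes $u$ with $s \in \Phi(u)$ is connected and contains every $s$-taxon.

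Finally, suppose for contradiction that $B \in \Phi(r_C)$. By the previous paragraph, choosing $B$ at $r_C$ would also cause $B$ to become unique, and in particular every $B$-taxon must lie in $T_2(r_C)$. But the hypothesis of the lemma states that at least one $B$-taxon lies \emph{outside} $T_2(r_C)$, an immediate contradiction. Hence $B \notin \Phi(r_C)$, as claimed.

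The only genuinely delicate point is the first paragraph: making sure that the configuration at $r_C$ really does fall under the TB rule and is not forced by inheritance. Once that is pinned down, the TB rule together with the lemma's asymmetric hypothesis on where the $B$-taxa lie pushes the argument through in essentially one step; the rest of the intermediate conditions in the definition of ``becomes unique'' are irrelevant because the ``all taxa in $T(r_C)$'' condition is already violated.
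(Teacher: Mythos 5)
Your proposal is correct and follows essentially the same route as the paper: establish that the assignment at $r_C$ was an arbitrary (TB-governed) choice because the edge into $r_C$ is a mutation edge, then observe that since the TB rule nonetheless picked the unique-becoming state $C$, every state in $\Phi(r_C)$ would have to become unique, which $B$ cannot because a $B$-taxon lies outside $T_2(r_C)$. The only cosmetic difference is that you allow the case $r_C = r$ (where the TB rule still applies), whereas the paper excludes it outright using the hypothesis; both are fine.
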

\begin{proof}
 \renewcommand{\qedsymbol}{$\blacksquare$}
 We know that, during the top-down phase of Fitch's algorithm, $r_C$ was allocated state $C$. Now, given the precondition that
 there are $B$ taxa both within the subtree of $T_2$ rooted
 at $r_C$, and outside it, $r_C \neq r$. Hence, $r_C$ has
 a parent $p$. Node $p$ cannot be allocated state
 $C$ by the extension, because then $p$ or some node even closer to the root would have been the
 root of $C$, not $r_C$ (contradiction). Hence, the edge $(p,r_C)$ is a mutation edge. This means that when executing the top-down phase of Fitch's algorithm, none of the states in $\Phi(r_C)$ were equal to the state allocated to $p$, so we selected a state for $r_C$ from
 $\Phi(r_C)$, using our tie-breaking rule: we selected state $C$. Suppose, for the sake of contradiction, $\Phi(r_C)$ also contained $B$. Noting that (i) $p$ was definitely not allocated state $B$ (if it had been, we would definitely have allocated state $B$ to $r_C$ instead of $C$) and (ii) the existence by assumption of $B$ taxa on both sides of the split induced by the mutation edge $(p,r_C)$, we observe that at the point the state for $r_C$ was selected, $B$ could definitely not be (or become) a unique state. Hence, the tie-breaking rule would never have picked state $C$ in the first place (it would have preferred $B$ or some other repeating state): contradiction.
\end{proof}

\begin{lem}
\label{lem:aboveAndBelow}
    Let $C$ be an arbitrary component from $U^B$, and $r_C$ its root. If there is at least one $B$-taxon in the subtree of $T_2$ rooted at $r_C$, and at least one $B$-taxon that is not inside this subtree, then no most parsimonious extension of (the restriction of) $\chi$ to the subtree $T_2(r_C)$ 
    allocates state $B$ to $r_C$. On the other hand, there is at least one most parsimonious extension of (the restriction of) $\chi$ to the subtree $T_2(r_C)$ that allocates state $C$ to $r_C$.
\end{lem}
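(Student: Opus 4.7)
My plan is to derive this lemma as a fairly immediate consequence of Lemma \ref{lem:tiebreak} together with the subtree version of Lemma \ref{lem:fitchoptimal} stated at the end of the preliminaries. The key observation is that Lemma \ref{lem:fitchoptimal} gives an exact characterization of which states can appear at the root of a most parsimonious extension of a (sub)tree: namely, the states in $\Phi(r_C)$, where the $\Phi$-sets are those computed by Fitch's bottom-up phase on all of $T_2$ (note that $\Phi(r_C)$ is the same whether one runs the bottom-up phase on $T_2$ or on $T_2(r_C)$, since the bottom-up phase is purely local and only depends on the subtree below $r_C$).

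For the first claim, I would argue directly: by the subtree version of Lemma \ref{lem:fitchoptimal}, every most parsimonious extension of the restriction of $\chi$ to $T_2(r_C)$ must assign $r_C$ some state in $\Phi(r_C)$. The preconditions of the current lemma are exactly the preconditions of Lemma \ref{lem:tiebreak}, which gives $B \notin \Phi(r_C)$. Hence no such extension can assign state $B$ to $r_C$.

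For the second claim, I would use that our specially selected $\bar{\chi}_2$ assigns state $C$ to $r_C$ by definition of $r_C$ (the topmost node of the $C$-component closest to the $F_2(B)$-root). Since $\bar{\chi}_2$ was obtained via the top-down phase of Fitch, any state actually assigned by the top-down phase to a node $v$ belongs to $\Phi(v)$; in particular $C \in \Phi(r_C)$. Invoking the ``for each state in $\Phi(v)$'' half of the extended Lemma \ref{lem:fitchoptimal}, there is at least one most parsimonious extension of the restriction of $\chi$ to $T_2(r_C)$ that assigns state $C$ to $r_C$, as required.

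The main obstacle is essentially bookkeeping: being careful about the distinction between running Fitch on $T_2$ versus on the subtree $T_2(r_C)$, and making sure the $\Phi$-sets agree at $r_C$, which they do because Fitch's bottom-up phase is defined recursively from the leaves. Beyond that there is no heavy lifting; the result is a direct combination of the two previously stated lemmas.
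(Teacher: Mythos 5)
Your proposal is correct and follows essentially the same route as the paper: the first claim is obtained by combining Lemma \ref{lem:tiebreak} with the subtree version of Lemma \ref{lem:fitchoptimal}, and the second by noting that the top-down phase assigned $C$ to $r_C$, so $C \in \Phi(r_C)$ and the ``for each state'' half of Lemma \ref{lem:fitchoptimal} applies. Your extra remark that the bottom-up $\Phi$-sets at $r_C$ coincide whether Fitch is run on $T_2$ or on $T_2(r_C)$ is a point the paper leaves implicit, but it is the same argument.
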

\begin{proof}
 \renewcommand{\qedsymbol}{$\blacksquare$}
The first statement immediately follows by combining Lemma \ref{lem:fitchoptimal}
 and Lemma \ref{lem:tiebreak}. The second statement holds because our execution of Fitch's algorithm allocated state $C$ to $r_C$, and (again from Lemma \ref{lem:fitchoptimal}) the top-down phase of Fitch's algorithm only assigns states to nodes that allow a most-parsimonious extension for the subtree rooted at that node.
\end{proof}

Informally, the previous two lemmas say: if we could have allocated state $B$ (or some other repeating state) to $r_C$ without losing optimality, we would have done this already instead of selecting state $C$. So, assigning state $C$ to $r_C$ was in some sense truly necessary to obtain local optimality (i.e., in the subtree rooted at $r_C$).

\begin{lem}
\label{lem:easysplit}
    Let $A, C$ be distinct components from $U^B$, and $r_A, r_C$ their roots in $T_2$. If there is a directed path from $r_A$ to $r_C$, 
    and $B \in \Phi(r_C)$, then $(A,B)$ is a good pair.
\end{lem}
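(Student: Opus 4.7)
The plan is to exhibit an edge of $T_2$ whose induced split separates $X_A$ from $X_B$, and then conclude immediately via Observation \ref{observation:split}. My candidate edge is $e = \{p, r_C\}$, where $p$ is the parent of $r_C$ in $T_2$; such a $p$ exists because the hypothesis gives a directed path from $r_A$ to $r_C$ with $A \neq C$, hence $r_C$ is a proper descendant of $r_A$ and so $r_C \neq r$.

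The first step I would carry out is to argue that $X_B \subseteq T_2(r_C)$. A standard induction on Fitch's up-pass shows that $\Phi(r_C)$ is contained in the set of states appearing as taxa of $T_2(r_C)$, so the hypothesis $B \in \Phi(r_C)$ already guarantees at least one $B$-taxon inside $T_2(r_C)$. If, in addition, some $B$-taxon were \emph{outside} $T_2(r_C)$, Lemma \ref{lem:tiebreak} applied to $C \in U^B$ would force $B \notin \Phi(r_C)$, contradicting the hypothesis. Hence every $B$-taxon lies in $T_2(r_C)$.

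Next I would verify $X_A \cap T_2(r_C) = \emptyset$. Since $A \in U^B$ is unique in $F_2$, all $A$-labelled vertices of $T_2$ lie in a single connected $A$-component whose top vertex is $r_A$. Suppose for contradiction that some $A$-taxon $x$ lay in $T_2(r_C)$. The unique $T_2$-path from $r_A$ to $x$ would then pass through $r_C$, and connectivity of the $A$-component forces every vertex on this path to be labelled $A$ by $\bar\chi_2$; this contradicts $\bar\chi_2(r_C) = C \neq A$.

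Combining the two previous steps, the edge $e = \{p, r_C\}$ generates a split in which the side containing $r_C$ holds all of $X_B$ and none of $X_A$. Since $A$ and $B$ are $F_1$-adjacent (because $A \in U^B$), Observation \ref{observation:split} immediately yields that $(A,B)$ is a good pair. I do not anticipate any genuine obstacle here: the delicate Fitch-tie-breaking reasoning has already been absorbed into Lemma \ref{lem:tiebreak}, and the rest is a purely structural argument about where $X_A$ and $X_B$ can live relative to $r_C$.
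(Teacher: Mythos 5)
Your proof is correct and follows essentially the same route as the paper's: use $B \in \Phi(r_C)$ together with the tie-breaking lemma (the paper cites Lemma \ref{lem:aboveAndBelow}, which is derived from Lemma \ref{lem:tiebreak}) to show all $B$-taxa lie in $T_2(r_C)$, observe that no $A$-taxa can lie there, and conclude via the split generated by the edge entering $r_C$ and Observation \ref{observation:split}. You are merely more explicit than the paper about why $X_A$ avoids $T_2(r_C)$, which is a welcome addition rather than a divergence.
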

\begin{proof}
 \renewcommand{\qedsymbol}{$\blacksquare$}
 From Lemma \ref{lem:fitchoptimal}, and the assumption of the current lemma, there exists a most
 parsimonious extension of (the restriction of) $\chi$ to the subtree $T_2(r_C)$ which
 assigns state $B$ to $r_C$. From this, it follows that at least
 one taxon in the subtree $T_2(r_C)$ is assigned state $B$ by
 $\chi$. If this was not the case, then by construction the bottom-up phase of Fitch's algorithm would never have put state $B$
 in $\Phi(r_C)$ in the first place. Now, due to Lemma \ref{lem:aboveAndBelow}, there cannot be any taxa outside $T_2(r_C)$ that are assigned state $B$ by $\chi$ (because this would contradict $B \in \Phi(r_C)$). Let
 $(p,r_C)$ be the edge entering $r_C$ in $T_2$. This edge defines a split
 of $T_2$ with all the taxa labelled $A$ on one side, and all the taxa labelled
 $B$ on the other. Hence, from Observation \ref{observation:split} $(A,B)$ is a good pair.
\end{proof}
An immediate consequence of the above lemma is that whenever there is a path from $r_A$ to $r_C$, then either $(A,B)$ is a good pair, or $B \not \in \Phi(r_C)$. 

\begin{figure}[ht]      \centering\vspace{0.5cm}
\begin{tikzpicture}
\sffamily
\tikzset{every node/.style={draw, circle, minimum size= 5pt, text=black}, 
xthick/.style ={line width = 1mm },
line around/.style={decoration={pre length=#1,post length=#1}},
pathB/.style={ultra thick,blue!50,decorate,decoration={shape backgrounds,shape=circle,shape size=0.2mm,shape sep = 0.1mm, raise = 2.5pt}}
}
\tikzset{
  middle dotted line/.style={
    decoration={show path construction, 
      lineto code={
          \draw[xthick,#1] (\tikzinputsegmentfirst) --($(\tikzinputsegmentfirst)!.25!(\tikzinputsegmentlast)$);,
          \draw[dotted,xthick,#1] ($(\tikzinputsegmentfirst)!.25!(\tikzinputsegmentlast)$)--($(\tikzinputsegmentfirst)!.75!(\tikzinputsegmentlast)$);,
          \draw[xthick,#1] ($(\tikzinputsegmentfirst)!.75!(\tikzinputsegmentlast)$)--(\tikzinputsegmentlast);,
      }
    },
    decorate
  },
}
 \coordinate (center) at (0,0);

    \node[minimum size = 60pt, fill = none] (a) at (0,0) {};
    \coordinate[above right=3cm of center] (c1);
    \node[below right=1.5cm and 2cm of center] (c2) {C$_1$};
    \node[below left =2.5cm of center] (c4) {C$_3$};
    \node[below=1.5cm and 2cm of center] (c3){C$_2$};


    \node[fill, label={[label distance = 0mm]-5:{$r_{C_3}$}}] (rc4) at (c4.north east) {}; 
     \node[fill, label={[label distance = 0mm]-20:{$r_{C_2}$}}] (rc3) at (c3.north east) {}; 
     \node[fill, label={[label distance = 0mm]20:{$r_{C_1}$}}] (rc2) at (c2.north west) {}; 
     \node[fill, label={[label distance = 0mm]120:{$r$}}] (rc1) at (c1.south west) {};

    \coordinate (lc4) at (a.south west); 
    \coordinate (lc3) at (a.south); 
     \coordinate (lc2) at (a.south east);
     \node[fill, label={[label distance = 0mm]-20:{$r_{A}$}}] (lc1) at (a.north east) {};

    \node[fill, above right=0.25cm and 0.25cm of a, label={[label distance = 0mm]-20:{$p$}}] (p) {};
    
    \coordinate[below right=0.8cm and 0.8cm of c1](off-root);
    
     \coordinate (split) at ($(center)!0.5!(lc2)$);
    \node[fill=none, draw=none, rotate=45, above =1mm of split](lbA){B (was A)};
     \draw[] (lc1)-- (center);
     \draw[] (center)-- (lc2);
     \draw[] (center)-- (split);
     \draw[] (split)-- (lc3);
     \draw[] (center)-- (lc4);
    \draw[xthick](p)--(lc1);
    \draw[xthick, dotted] (p)--(rc1);
    \draw[xthick,middle dotted line] (lc2)--(rc2);
    \draw[xthick,middle dotted line] (lc3)--(rc3);
    \draw[xthick,middle dotted line] (lc4)--(rc4);

    \draw[xthick, dotted] (rc1)--(off-root);

    \coordinate[below right= 0.25cm and 0.25cm of c3] (c3o3);

    \draw[xthick] (c3) -- (c3o3);

    \coordinate[above right= 0cm and 0.5cm of c2](c2o1);
    \coordinate[right= 0cm and 0.5cm of c2] (c2o2);
    \coordinate[below right= 0cm and 0.5cm of c2] (c2o3);
    
    \draw[xthick] (c2) -- (c2o1);
    \draw[xthick] (c2) -- (c2o2);
    \draw[xthick] (c2) -- (c2o3);

 

    \coordinate[below right= 0.25cm and 0.25cm of c4] (c4o1);
    \coordinate[below= 0.25cm and 0.25cm of c4] (c4o2);
    \coordinate[below left= 0.25cm and 0.25cm of c4] (c4o3);
    \coordinate[left= 0.25cm and 0.25cm of c4] (c4o4);
    
    \draw[xthick] (c4) -- (c4o1);
    \draw[xthick] (c4) -- (c4o2);
    \draw[xthick] (c4) -- (c4o3);
    \draw[xthick] (c4) -- (c4o4);

    \node[draw=none, fill=none, text=black, rectangle] (text2) [right of=c2o2]{Towards B};
    \node[draw=none, fill=none, text=black, rectangle] (text3) [below= 1mm of c3o3]{Towards B};
    \node[draw=none, fill=none, text=black, rectangle] (text4) [below= 1mm of c4o2]{Towards B};

    \draw[pathB]($(rc1)!0.5!(p)$) -- (center);
    \draw[pathB](center) -- (lc4);
    \draw[pathB](lc4) -- ($(lc4)!0.7!(rc4)$);
    \draw[pathB](center) -- (split);
    \draw[pathB](split) -- (lc2);
    \draw[pathB](lc2) -- ($(lc2)!0.7!(rc2)$);
    \draw[pathB](split) -- (lc3);
    \draw[pathB](lc3) -- ($(lc3)!0.5!(rc3)$);

    \node[rectangle, draw=none, above left= of center, text width=5cm, text=black!50](edge){When switching back to A, do not relabel above this edge (this creates at most one mutation)};
    \node[rectangle, draw= none, right= of rc1, text width=5cm, text=black!50](lbRoot){Root $r$ (to make Fitch's algorithm possible).};

    \draw[->,dotted, thick, black!50]([xshift=-0.5cm]edge.east)--([yshift=0.1cm]$(p)!0.5!(lc1)$);
\end{tikzpicture}
    \caption{This figure continues from Figure \ref{fig:fullyblocked}, and illustrates Lemma \ref{lem:alphagood}. There are directed paths from $r_A$ to $r_{C_1},r_{C_2}$ and $r_{C_3}$. This means that, after relabelling $A$ to $B$, and constructing a new most parsimonious extension, there cannot be any nodes labelled $B$ - here $B$ is shown in blue - that pass through 
    $r_{C_1},r_{C_2}$ or $r_{C_3}$. This is a consequence of the way we broke ties in the top-down phase of Fitch's algorithm. This means that,  when switching back to state $A$, there is at most one new mutation induced: on the edge between $r_A$ and its parent $p$.}\label{fig:relabel1}
  \end{figure}

\begin{figure}[ht]
\centering
   \begin{subfigure}[b]{\textwidth}
        \centering
       \begin{tikzpicture}
\sffamily
\tikzset{every node/.style={draw, circle, minimum size= 5pt, text=black}, 
xthick/.style ={line width = 1mm },
line around/.style={decoration={pre length=#1,post length=#1}},
pathB/.style={ultra thick,blue!50,decorate,decoration={shape backgrounds,shape=circle,shape size=0.2mm,shape sep = 0.1mm, raise = 2pt}}
}
\tikzset{
  middle dotted line/.style={
    decoration={show path construction, 
      lineto code={
          \draw[xthick,#1] (\tikzinputsegmentfirst) --($(\tikzinputsegmentfirst)!.25!(\tikzinputsegmentlast)$);,
          \draw[dotted,xthick,#1] ($(\tikzinputsegmentfirst)!.25!(\tikzinputsegmentlast)$)--($(\tikzinputsegmentfirst)!.75!(\tikzinputsegmentlast)$);,
          \draw[xthick,#1] ($(\tikzinputsegmentfirst)!.75!(\tikzinputsegmentlast)$)--(\tikzinputsegmentlast);,
      }
    },
    decorate
  },
}

\tikzset{
  trailoff line/.style={
    decoration={show path construction, 
      lineto code={
          \draw[xthick,#1] (\tikzinputsegmentfirst) --($(\tikzinputsegmentfirst)!.25!(\tikzinputsegmentlast)$);,
          \draw[dotted,xthick,#1] ($(\tikzinputsegmentfirst)!.25!(\tikzinputsegmentlast)$)--($(\tikzinputsegmentfirst)!.75!(\tikzinputsegmentlast)$);,
          \draw[dotted,xthick,#1] ($(\tikzinputsegmentfirst)!.75!(\tikzinputsegmentlast)$)--(\tikzinputsegmentlast);,
      }
    },
    decorate
  },
}
 \coordinate (center) at (0,0);

    \node[minimum size = 60pt, fill = none] (a) at (0,0) {};
    \coordinate[above right=3cm of center] (r);
    \node[below right=2.5cm of center] (c2) {C};
     \node[fill, label={[label distance = 0mm]20:{$r_{C}$}}] (rc2) at (c2.north west) {}; 
     \node[fill, label={[label distance = 0mm]120:{$r$}}] (rr) at (r.south west) {};

    \coordinate (lc4) at (a.south west); 
    \coordinate (lc3) at (a.south); 
     \coordinate (lc2) at (a.south east);
     \node[fill, label={[label distance = 0mm]-20:{$r_{A}$}}] (lr) at (a.north east) {};

    \coordinate[below right=0.8cm and 0.8cm of r](off-root);
    \coordinate[below left =2.5cm of center] (c4);
    
     \coordinate (split) at ($(center)!0.5!(lc2)$);
    \node[fill=none, draw=none, rotate=45, above =1mm of split](lbA){B (was A)};
     \draw[] (lr)-- (center);
     \draw[] (center)-- (lc2);
     \draw[] (center)-- (split);
     \draw[] (center)-- (lc4);
    \draw[xthick, dotted] (lr)--(rr);
    \draw[xthick,middle dotted line] (lc2)--(rc2);
    \draw[xthick,trailoff line] (lc4)--(rc4);

    \draw[xthick, dotted] (rr)--(off-root);

    \coordinate[above right= 0cm and 0.5cm of c2](c2o1);
    \coordinate[right= 0cm and 0.5cm of c2] (c2o2);
    \coordinate[below right= 0cm and 0.5cm of c2] (c2o3);
    
    \draw[xthick] (c2) -- (c2o1);
    \draw[xthick] (c2) -- (c2o2);
    \draw[xthick] (c2) -- (c2o3);

    \node[draw=none, fill=none, text=black, rectangle] (text2) [right of=c2o3]{Towards B};
    
    \node[draw=none, fill=none, text=black, rectangle] (text2) [left of=c4]{Towards B};

    \draw[pathB](center) -- (lc4);
    \draw[pathB](lc4) -- ($(lc4)!0.8!(c4)$);
    \draw[pathB](center) -- (lc2);
    \draw[pathB](lc2) -- ($(lc2)!0.65!(rc2)$);

    \node[rectangle, draw=none, above left= of center, text width=5cm, text=black!50](edge){When relabeling back to A, do not relabel past this edge};

    \draw[->,dotted, thick, black!50]([xshift=-0.5cm]edge.south)--([yshift=0.1cm]$(lc4)!0.15!(rc4)$);

    \coordinate[] (ref) at ($(lc4)!0.2!(c4)$);
    \node[draw=none, rotate= 45,fill=none, sloped, below right=3mm of ref](ce) {$e$};

    \draw [decorate,decoration={brace,amplitude=5pt,raise=1ex}]
  (lc4) -- ($(lc4)!0.2!(c4)$);
\end{tikzpicture}
       \subcaption{Case (1)}
   \end{subfigure}
   
   \begin{subfigure}[b]{\textwidth}
        \centering
       \begin{tikzpicture}
\sffamily
\tikzset{every node/.style={draw, circle, minimum size= 5pt, text=black}, 
xthick/.style ={line width = 1mm },
line around/.style={decoration={pre length=#1,post length=#1}},
pathB/.style={ultra thick,blue!50,decorate,decoration={shape backgrounds,shape=circle,shape size=0.2mm,shape sep = 0.1mm, raise = 2pt}}
}
\tikzset{
  middle dotted line/.style={
    decoration={show path construction, 
      lineto code={
          \draw[xthick,#1] (\tikzinputsegmentfirst) --($(\tikzinputsegmentfirst)!.25!(\tikzinputsegmentlast)$);,
          \draw[dotted,xthick,#1] ($(\tikzinputsegmentfirst)!.25!(\tikzinputsegmentlast)$)--($(\tikzinputsegmentfirst)!.75!(\tikzinputsegmentlast)$);,
          \draw[xthick,#1] ($(\tikzinputsegmentfirst)!.75!(\tikzinputsegmentlast)$)--(\tikzinputsegmentlast);,
      }
    },
    decorate
  },
}

\tikzset{
  trailoff line/.style={
    decoration={show path construction, 
      lineto code={
          \draw[xthick,#1] (\tikzinputsegmentfirst) --($(\tikzinputsegmentfirst)!.25!(\tikzinputsegmentlast)$);,
          \draw[dotted,xthick,#1] ($(\tikzinputsegmentfirst)!.25!(\tikzinputsegmentlast)$)--($(\tikzinputsegmentfirst)!.75!(\tikzinputsegmentlast)$);,
          \draw[dotted,xthick,#1] ($(\tikzinputsegmentfirst)!.75!(\tikzinputsegmentlast)$)--(\tikzinputsegmentlast);,
      }
    },
    decorate
  },
}
 \coordinate (center) at (0,0);

    \node[minimum size = 60pt, fill = none] (a) at (0,0) {};
    \coordinate[above right=3cm of center] (r);
    
    \node[below right=1cm and 1cm of r](off-root){$C$};
     \node[fill, label={[label distance = 0mm]120:{$r$}}] (rr) at (r.south west) {};
     \node[fill, label={[label distance = 0mm]10:{$r_{C}$}}] (rc) at (off-root.north west) {};

    \coordinate (lc4) at (a.south west); 
    \coordinate (lc3) at (a.south); 
    \node[fill, label={[label distance = 0mm]-20:{$r_{A}$}}] (lr) at (a.north east) {};


    \coordinate[below left =2.5cm of center] (c4);
    
     \coordinate (split) at ($(center)!0.5!(lc2)$);
    \node[fill=none, draw=none, rotate=45, above =1mm of split](lbA){B (was A)};
     \draw[] (lr)-- (center);
     \draw[] (center)-- (lc4);
    \draw[middle dotted line] (lr)--(rr);
    \draw[xthick,trailoff line] (lc4)--(rc4);

    \draw[middle dotted line] (rr)--(rc);

    \node[draw=none, fill=none, text=black, rectangle] (text2) [left of=c4]{Towards B};

    \draw[pathB](center) -- (lc4);
    \draw[pathB](lc4) -- ($(lc4)!0.8!(c4)$);
    \draw[pathB](lc1) -- (center);
    \draw[pathB](rr) -- (lc1);
    \draw[pathB]($(rc)!0.3!(rr)$) -- (rr);

    \node[rectangle, draw=none, above left= of center, text width=5cm, text=black!50](edge){When relabeling back to A, do not relabel past this edge};

    \draw[->,dotted, thick, black!50]([xshift=-0.5cm]edge.south)--([yshift=0.1cm]$(lc4)!0.15!(rc4)$);

    \coordinate[] (ref) at ($(lc4)!0.2!(c4)$);
    \node[draw=none, rotate= 45,fill=none, sloped, below right=3mm of ref](ce) {$e$};

    \draw [decorate,decoration={brace,amplitude=5pt,raise=1ex}]
  (lc4) -- ($(lc4)!0.2!(c4)$);
\end{tikzpicture}
       \subcaption{Case (2)}
   \end{subfigure}
    \caption{This figure illustrates the two subcases of Lemma \ref{lem:betagood}, and continues from Figure \ref{fig:deg2blocked}. In case (1) one of $A$ and $C$ is an ancestor of the other; in case (2) they are incomparable. Note that in both cases the root $r_C$ of $C$ will not be allocated state $B$ (shown as blue) in the new most parsimonious extension. When switching back from $B$ to $A$ the only place that a new mutation can be created is on edge $e$. }\label{fig:relabel2}
  \end{figure}

\begin{lem}
\label{lem:alphagood}
Suppose ($\alpha$) of Lemma \ref{lem:blocked} holds. Let $A$ be the completely blocked component in $U^B$ whose existence is guaranteed by ($\alpha$). Then
$(A,B)$ is a good pair.
\end{lem}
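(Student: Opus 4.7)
The plan is to prove $(A,B)$ is a good pair by establishing $\ell(T_2, \newchi) = \ell(T_2, \chi) - 1$, where $\newchi$ is the character obtained from $\chi$ by relabelling $A$ to $B$. The inequality $\ell(T_2, \newchi) \leq \ell(T_2, \chi) - 1$ is routine: since $\chi$ is an optimal convex character on $T_1$, $\ell(T_2, \chi) - \ell(T_1, \chi) = d_{\mathrm{MP}}(T_1, T_2) = k$, while $\newchi$ is still convex on $T_1$ with one fewer state, so $\ell(T_1, \newchi) = \ell(T_1, \chi) - 1$; combining these with $d_{\mathrm{MP}}(T_1, T_2) \geq |\ell(T_1, \newchi) - \ell(T_2, \newchi)|$ yields the claim.

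For the harder direction $\ell(T_2, \newchi) \geq \ell(T_2, \chi) - 1$, I argue by contradiction. Suppose a most parsimonious extension $\bar{\nu}$ of $\newchi$ satisfies $|\Delta(T_2, \bar{\nu})| \leq \ell(T_2, \chi) - 2$. I will build from $\bar{\nu}$ an extension $\bar{\chi}'$ of $\chi$ with $|\Delta(T_2, \bar{\chi}')| \leq |\Delta(T_2, \bar{\nu})| + 1 \leq \ell(T_2, \chi) - 1 < \ell(T_2, \chi)$, contradicting the minimality of $\ell(T_2, \chi)$. The construction is to set $\bar{\chi}'(v) = A$ for $v$ in the $A$-region $R_A$ of $\bar{\chi}_2$ and $\bar{\chi}'(v) = \bar{\nu}(v)$ otherwise; since the leaves of $R_A$ in $T_2$ are exactly the $A$-taxa and $\bar{\nu}$ never uses state $A$, this is a valid extension of $\chi$. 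Assuming first that $R_A$ lies entirely inside the $B$-region of $\bar{\nu}$ (the alternative being handled separately), a direct count shows that $\bar{\nu}$ and $\bar{\chi}'$ disagree in mutation status only on the $d_A$ boundary edges of $R_A$, yielding $|\Delta(T_2, \bar{\chi}')| = |\Delta(T_2, \bar{\nu})| + b$, where $b$ counts those boundary edges whose outside endpoint is labelled $B$ by $\bar{\nu}$. It thus suffices to show $b \leq 1$.

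The $d_A$ stub endpoints split into the single up stub $v_{\mathrm{up}}$ (the parent of $r_A$ in $T_2$) and the down stubs $v_1, \dots, v_{d_A-1}$, which coincide with the roots $r_{X_i}$ of the corresponding down-direction $F_2$-neighbours $X_i$ of $A$. By Lemma \ref{lem:aboveAndBelow} applied at each blocker root $r_C$ (for $C \in U^B \setminus \{A\}$ lying on some $B$-reaching path from $A$), combined with an MPE-exchange argument showing that swapping $\bar{\nu}(r_C)$ from $B$ to $C$ saves at least one mutation inside $T_2(r_C)$ while adding at most one on the edge above $r_C$, we may assume without loss of generality that $\bar{\nu}(r_C) \neq B$ for every blocker $C$. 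When $X_i$ is itself a blocker this immediately gives $\bar{\nu}(v_i) \neq B$; when $X_i \notin U^B$, an iterated version of the same exchange argument, anchored at the first blocker $C_*$ encountered on the $B$-reaching path through $X_i$, propagates the non-$B$ labelling outward along the intermediate components and forces $\bar{\nu}(v_i) \neq B$. So none of the down stubs contribute to $b$, leaving only $v_{\mathrm{up}}$, and hence $b \leq 1$. The single possible extra mutation corresponds precisely to the edge $(r_A, v_{\mathrm{up}})$ highlighted in Figure \ref{fig:relabel1}'s caption.

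The hardest step is the propagation: for down stubs whose immediate neighbour $X_i$ is not in $U^B$, one must iteratively apply Lemma \ref{lem:aboveAndBelow} along the $B$-reaching path from the deepest blocker $C_*$ outward to $v_i$, showing that the tie-breaking rule together with Fitch-optimality of $\bar{\nu}$ precludes a $B$-labelling at $v_i$. A secondary and more routine subtlety is ensuring $\bar{\nu}(r_A) = B$ (so that $R_A$ really does lie inside the $B$-region of $\bar{\nu}$, as assumed in the counting), which can also be arranged by a localised MPE-exchange at $r_A$ against any competing labelling.
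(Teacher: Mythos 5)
Your overall strategy --- contradicting $\ell(T_2,\newchi) \leq \ell(T_2,\chi)-2$ by converting a most parsimonious extension of $\newchi$ back into an extension of $\chi$ at the cost of at most one extra mutation, using the blockers and the tie-breaking lemmas to control where that mutation can appear --- matches the paper's. But there is a genuine gap in the step bounding $b$: you claim no downward boundary edge of $R_A$ has its outer endpoint labelled $B$ by $\bar{\nu}$, and for a down-neighbour $X_i \notin U^B$ you justify this by ``propagating'' the exchange argument outward from the first blocker $C_*$ on the path. That propagation does not work. Lemmas \ref{lem:tiebreak} and \ref{lem:aboveAndBelow} exclude $B$ from $\Phi(r_C)$ only for roots of components in $U^B$, because it is precisely the TB rule's reluctance to create a unique state that forces $B \notin \Phi(r_C)$ there. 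An intermediate component $X_i$ between $A$ and its blockers may carry a repeating state, in which case nothing prevents $B \in \Phi(r_{X_i})$; since $T_2(r_{X_i})$ contains no $A$-taxa, $\Phi(r_{X_i})$ is unchanged for $\newchi$, and a most parsimonious extension of $\newchi$ can (and, if built by Fitch under a $B$-labelled parent, will) assign $B$ to $r_{X_i}$. Then the edge from $R_A$ down to $r_{X_i}$ is a non-mutation in $\bar{\nu}$ that your recolouring turns into a mutation, so $b$ can be $2$ or more and the count fails. The non-$B$ guarantee genuinely stops at the blockers' roots, which may lie several components below the boundary of $R_A$.

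The paper avoids this by not recolouring the fixed region $R_A$: it instead floods state $A$ from the $X_A$ taxa through all adjacent $B$-labelled nodes of the new extension, never crossing above $r_A$. The flood freely swallows any intermediate components that the new extension absorbed into its $B$ region, and it is guaranteed to halt before reaching a genuine $B$-taxon because every $B$-monochromatic path from $X_A$ to a $B$-taxon avoiding $(p,r_A)$ must pass through some $r_C$ with $C \in U^B$ a descendant blocker, and those roots are provably not labelled $B$. The flood's boundary therefore consists only of edges that were already mutations, plus possibly $(p,r_A)$, giving the desired ``at most one new mutation''. (A secondary point: the paper constructs the extension of $\newchi$ via Fitch with the same root, so $B \notin \Phi(r_C)$ immediately yields a non-$B$ label at $r_C$; for an arbitrary most parsimonious extension, as in your argument, even a blocker's root can be labelled $B$, and your exchange step, while repairable, needs care when blockers are nested.) Replacing your fixed recolouring of $R_A$ by this flooding operation would close the gap and recover the paper's proof.
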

\begin{proof}
     \renewcommand{\qedsymbol}{$\blacksquare$}
See Figure \ref{fig:relabel1} for a schematic depiction of the ideas in this proof.

Consider any component $C \in U^B$ such that
there is a directed path from $r_A$ to $r_C$\footnote{Such a component $C$ will definitely exist, due to $A$ being a
node of $F_2(B)$ and being completely blocked, but we do not actually need this fact.}.

If $B \in \Phi(r_C)$, then by Lemma \ref{lem:easysplit} $(A,B)$ is a good pair and we are immediately done. Hence, for all such $C$ we have that $B \not \in \Phi(r_C)$. Now, let $X_A$ be the set of all taxa that are assigned $A$ by $\chi$. 
 Due to the fact that $A$ is completely blocked, any (undirected) path in $T_2$ from a taxon in $X_A$ to a taxon labelled $B$, must pass through $r_C$ -- where $C$ is some state in $U^B$ that blocked $A$ and there is a directed path from $r_A$ to $r_C$ -- or through the edge $(p, r_A)$, where $p$ is the parent of $r_A$\footnote{Due to $A$ being completely blocked, a path that passes through the edge $(p,r_A)$ will also pass through some state $C \in U^B$, but we cannot guarantee that it will pass through $r_C$, due to $C$ not being a descendant of $A$. This is why we treat this case differently.}.

We create a new character from $\chi$, called $\newchi$, in which the taxa labelled $A$ by $\chi$ are now labelled $B$.
Due to the adjacency of $A$ and $B$ in $F_1$, $\ell(T_1, \newchi) = \ell(T_1, \chi)-1$, and $\newchi$ has one fewer state than $\chi$, so it is convex on $T_1$.
Our goal is to show that $\ell(T_2, \newchi) \geq \ell(T_2, \chi)-1$. Assume that this does not hold, i.e., that
$\ell(T_2, \newchi) \leq  \ell(T_2, \chi)-2$. We will derive a contradiction on this assumption. We start by constructing a most parsimonious extension of $\newchi$ to $T_2$; for this purpose we use Fitch's algorithm with the same root location as before (but this time arbitrary tie-breaking in the top-down phase is sufficient). Observe that if, in the original most parsimonious extension, $r_C$ could be reached by a directed path from $r_A$, then the set of states allocated by the bottom-up phase of Fitch's algorithm to node $r_C$, will be \emph{the same} for $\chi$ and $\newchi$. This is because the subtree $T_2(r_C)$ does not contain any taxa from $X_A$, and hence the restriction of $\chi$ to the taxa in $T_2(r_C)$ is equal to the restriction of $\newchi$ to this subtree; and thus, the bottom-up phase of Fitch's algorithm will execute identically for $\chi$ and $\newchi$ on this subtree, assigning the subset $\Phi(r_C)$ to $r_C$ in both
cases. Now, we know that $B \not \in \Phi(r_C)$. 
Hence, the most parsimonious extension constructed by Fitch's algorithm for $\newchi$ will not assign $B$ to $r_C$. This holds for every
 state $C \in U^B$ which blocked $A$ and had the property that there was a directed path from $r_A$ to $r_C$ in the original most parsimonious extension.

As a consequence of this, we have a most parsimonious extension of $\newchi$ to $T_2$ with the following property: any (undirected) path that starts at a taxon in $X_A$ and reaches a taxon with state $B$ such that all nodes on the path have state $B$, \emph{must} pass through the edge $(p, r_A)$ that feeds into $r_A$. Now, we switch the taxa $X_A$ back to state $A$, and any adjacent interior $B$ nodes also to $A$, and so on, continuing this
\textbf{flooding} operation to saturation with one exception: in the `upwards' direction we do not go beyond $r_A$. This might mean a single new mutation is incurred on edge $(p, r_A)$, but nowhere else. 
Hence, we have an extension of $\chi$ to $T_2$ that has at most
$\ell(T_2, \newchi)+1$ mutations. Given the assumption that $\ell(T_2, \newchi) \leq \ell(T_2, \chi)-2$, this gives
$\ell(T_2, \chi) \leq \ell(T_2, \chi)-1$: contradiction. So, indeed, $(A,B)$ is a good pair.
\end{proof}

\begin{lem}
\label{lem:betagood}
Suppose ($\beta$) of Lemma \ref{lem:blocked} holds. Let $A, C$ be the two distinct components in $U^B$ whose existence is guaranteed by ($\beta$). Then at
least one of $(A,B)$, $(C,B)$ is a good pair.
\end{lem}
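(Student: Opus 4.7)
The plan is to mirror the flooding argument of Lemma~\ref{lem:alphagood}, but with a single carefully chosen boundary edge $e$ of the $A$-component in $T_2$ playing the role that $(p,r_A)$ plays there. I would split into two cases based on the rooted structure of $T_2$: (1) one of $r_A, r_C$ is a strict ancestor of the other, in which case, after swapping $A \leftrightarrow C$ if necessary, I assume WLOG that $r_A$ is an ancestor of $r_C$; and (2) $r_A$ and $r_C$ are incomparable. In both cases I argue that $(A,B)$ is a good pair, which is sufficient since the WLOG in Case~(1) subsumes the $(C,B)$ conclusion when the ancestry is reversed, and Case~(2) is symmetric in $A$ and $C$.

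The first step is to verify that Lemma~\ref{lem:tiebreak} applies at $r_C$, yielding $B \notin \Phi(r_C)$. In both cases $A$ lies in the upward direction from $r_C$ (by the case hypothesis in (1), and because $A \notin T_2(r_C)$ in (2)), so the $F_2(B)$-edge from $C$ toward $A$ exits the $C$-component through its unique upward boundary at $r_C$. Since $C$ has degree $2$ in $F_2(B)$ by ($\beta$), its other $F_2(B)$-neighbor is reached through a downward edge out of the $C$-component, placing $B$-taxa inside $T_2(r_C)$; meanwhile the $B$-taxa reachable from $A$ via $A$'s other $F_2(B)$-neighbor (call it $D$) lie outside $T_2(r_C)$. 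Thus $B \notin \Phi(r_C)$, and since $T_2(r_C)$ contains no $A$-taxa this is preserved after relabelling $A := B$, so the most parsimonious extension of $\newchi$ produced by Fitch does not label $r_C$ with $B$.

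The second step is to identify $e$: the boundary edge of the $A$-component in $T_2$ leading toward $D$. Every other boundary edge of the $A$-component leads either toward $C$, where the flood is automatically blocked at $r_C$, or into a non-$F_2(B)$ subtree, which contains no $B$-taxa at all; the latter uses the minimality of $F_2(B)$ combined with the fact that, by ($\beta$), each intermediate component along the $A$-to-$C$ corridor in $F_2(B)$ has degree exactly $2$ there, so no further $F_2(B)$-branches can peel off the corridor. I then relabel $A := B$ to form $\newchi$ and flood the $B$-connected region containing $X_A$ in the most parsimonious extension of $\newchi$ back to state $A$, stopping only at $e$ (the block at $r_C$ is automatic). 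The only place a new mutation can arise is on $e$, giving an extension of $\chi$ with at most $\ell(T_2,\newchi) + 1$ mutations; hence $\ell(T_2,\newchi) \geq \ell(T_2,\chi) - 1$, and $(A,B)$ is a good pair, exactly as in the closing step of Lemma~\ref{lem:alphagood}.

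I expect the main obstacle to be the boundary bookkeeping around the $A$-component, specifically verifying that $e$ is the unique edge at which a new mutation can appear when the flood is undone. This relies on both the degree-$2$ property of $A$ in $F_2(B)$ (restricting the dangerous $F_2(B)$-directions from $A$ to only two, one of which is handled by the tie-breaking lemma applied at $r_C$) and the observation that sub-branches peeling off the $A$-to-$C$ corridor carry no $B$-taxa, so the flood can safely spill into them without relabelling any original $B$-taxon.
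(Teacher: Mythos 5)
Your proposal is correct and follows essentially the same route as the paper's proof: the same ancestor/incomparable case split on $r_A, r_C$, the same use of the tie-breaking rule to conclude $B \notin \Phi(r_C)$ (and its preservation under the relabelling since $T_2(r_C)$ contains no $A$-taxa), the same identification of a single escape edge $e$ from the degree-$2$ property of $A$ in $F_2(B)$, and the same flooding argument yielding at most one new mutation. The only cosmetic difference is that you establish $B \notin \Phi(r_C)$ directly from Lemma \ref{lem:tiebreak} by observing that the degree-$2$ structure forces $B$-taxa both inside and outside $T_2(r_C)$, whereas the paper routes through Lemma \ref{lem:easysplit} in its first case and through the ``no $B$-taxa in $T_2(r_A)$ or $T_2(r_C)$'' subcases plus Lemma \ref{lem:aboveAndBelow} in its second.
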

\begin{proof}
     \renewcommand{\qedsymbol}{$\blacksquare$}
     See Figure \ref{fig:relabel2}.
     We distinguish two cases.
     \begin{enumerate}
         \item There is a directed path in $T_2$ from $r_A$ to $r_C$, or from $r_C$ to $r_A$. Suppose without loss of generality that the path is from $r_A$ to $r_C$. From Lemma \ref{lem:easysplit}, we either have that $(A,B)$ is a good pair, in which we are done, or $B \not \in \Phi(r_C)$. We define $X_A$ and $\newchi$ as in the previous lemma, and construct (using Fitch's algorithm with the same root location) a most parsimonious extension for $\newchi$. As argued in the previous lemma, $\Phi(r_C)$ is unchanged in this new extension, so $r_C$ is also not assigned $B$ in this new extension. Hence, any path starting from a taxon in $X_A$ and reaching a $B$ taxon such that all nodes of the path apart from the first one have state $B$, cannot pass through $r_C$. Moreover, due to the fact that $A$ has degree-2 in $F_2(B)$, any such paths must pass through a single edge $e$ (corresponding to the unique mutation edge incident to $A$ that led towards $B$ components, but \emph{away} from $r_C$). The flooding operation described at the end of the previous lemma works here too, with the difference that we should not cross edge $e$ when switching $B$ nodes back to $A$. As before, this incurs at most one new mutation in the transformation from the most parsimonious extension for $\newchi$ to an extension for
         $\chi$, proving that $\ell(T_2, \newchi) \geq \ell(T_2, \chi)-1$, and thus proving that $(A,B)$ is indeed a good pair.
         \item The previous case does not hold i.e., $r_A$ and $r_C$ are incomparable. This means that $r_A$ and $r_C$ are on opposite sides of the root $r$ in $T_2$.  Suppose there are no $B$ taxa in $T_2(r_A)$; then $(A,B)$ is a good pair because the edge entering $r_A$ induces
         a split separating the $A$ and $B$ taxa in $T_2$ completely, and we are done. Symmetrically, if there
         are no $B$ taxa in $T_2(r_C)$, then $(C,B)$ is a good pair, and we are done. So, we assume that there are $B$ taxa in both $T_2(r_A)$ and $T_2(r_C)$.
         Hence, by Lemma \ref{lem:aboveAndBelow}, $B \not \in \Phi(r_A)$ and $B \not \in \Phi(r_C)$. We 
         show that $(A,B)$ is a good pair. (In fact, so is $(C,B)$ but it is not necessary to argue this). Consider that all paths that start at a taxon in $X_A$ and reach a $B$ taxon along nodes that also have state $B$, while avoiding $r_C$, must pass through the same mutation edge $e$ incident to $A$. This is, as in the previous case, because $A$ has degree-2; in this case $e$ will lead downwards towards the $B$ taxa that are reachable by directed paths from $r_A$. Hence, just like in the previous case, applying the flooding operation from taxa in $X_A$ but not crossing edge $e$ transforms a most parsimonious extension for $\newchi$ into an extension for $\chi$, whilst inducing at most one new extra mutation.
     \end{enumerate}

\end{proof}

 The proof of Theorem \ref{thm:moonshot} is concluded by observing that by Lemma \ref{lem:blocked} at least one of ($\alpha$), ($\beta$) holds, and thus by Lemmas \ref{lem:alphagood} and \ref{lem:betagood}, respectively, a good pair certainly exists.
\end{proof}

\section{Bounding the number of states}\label{sec:Bounding-number-of-states}

\improvedBoundedStates*
\begin{proof}
 Let $\chi$ be an optimal convex character that has a minimum number of states. We construct most parsimonious extensions
$\Bar{\chi}_1$ and $\Bar{\chi}_2$ of $\chi$ to $T_1$ and $T_2$ respectively (and the resulting forests $F_1$ and $F_2$) in the same way as described at the start of Section \ref{sec:adjacency}.
By assumption $\chi$ has no good pairs, because a good pair by definition would allow the number of states to be further reduced, contradicting minimality. Hence, by Observation \ref{obs:3.3boes2016} there are no adjacencies between unique states in $F_1$. Moreover, by Theorem \ref{thm:moonshot} and the assumed absence of good pairs, a repeated state that occurs with multiplicity $k \geq 2$ in $F_2$ has at most $2(k-1)-1 = 2k-3$ unique neighbours in $F_1$.

 Let $r$ be the number of repeating states and $u$ the number of unique states. Now, let $r_k$ be the number of repeating states that have multiplicity $k \geq 2$. Given that $\left(\sum_{k \geq 2} r_k\right) = r$, the total number $t$ of states is
 \begin{equation}
 \label{eq:total}
 t=\left(\sum_{k \geq 2} r_k\right) + u.
 \end{equation}

Given that $F_1$ is connected, and that no two unique states are adjacent in $F_1$, each unique state must therefore have at least one repeating state neighbour in $F_1$. Combining this with Theorem \ref{thm:moonshot} gives,
\begin{equation}
\label{eq:bound}
u \leq \sum_{k \geq 2} (2k-3) r_k.
\end{equation}
Hence from (\ref{eq:total}) and (\ref{eq:bound}) for the total number $t$ 
of states we have: 
\begin{align}
&  t \leq \left(\sum_{k \geq 2} r_k\right) + \sum_{k \geq 2} (2k-3) r_k = \sum_{k \geq 2} (2k-2) r_k = 2 \sum_{k \geq 2} (k-1) r_k \label{eq:prior}.
\end{align}
Next, we will show the following:
\begin{equation}
\label{eq:crucial}
\sum_{k \geq 2} (k-1)r_k = d_{\mathrm{MP}}(T_1,T_2).
\end{equation}
To see that this it holds, recall
that the parsimony score of $T_1$ on $\chi$ is $|\chi|-1$ (due to convexity) and that
the parsimony score on $T_2$ is thus $|\chi|-1 + d_{\mathrm{MP}}(T_1,T_2)$ (due to $\chi$ being an optimal
character). Now, the parsimony score on $T_2$ is also equal to the number of
components in $F_2$ minus 1, which by definition of $r_k$ is in turn equal to
 \[
 \left(\sum_{k \geq 2} k \cdot r_k\right) + u - 1,
 \]
so we have
\[
 \left(\sum_{k \geq 2} k \cdot r_k\right) + u - 1 = |\chi|-1 + d_{\mathrm{MP}}(T_1,T_2).
\]

We also know that $|\chi| = r + u$, so substituting $u = |\chi|-r$ into the above equation gives: 
\begin{align*}
 \left(\sum_{k \geq 2} k \cdot r_k\right) + (|\chi|-r)  - 1 &= |\chi|-1 + d_{\mathrm{MP}}(T_1,T_2), \\
  \left(\sum_{k \geq 2} k \cdot r_k\right) - r &=  d_{\mathrm{MP}}(T_1,T_2), \\
\sum_{k \geq 2} (k-1)r_k& = d_{\mathrm{MP}}(T_1,T_2),
\end{align*}
where the last equality is due to $r=\sum_{k\geq 2} r_k$ by definition. The theorem now follows from Equations \eqref{eq:prior} and \eqref{eq:crucial}.
\end{proof}

\clearpage

\section{A lower bound on the number of states}
\label{sec:mareike}

In the proof of Lemma 3.7 of \cite{FischerKelk2016}, it was shown that there are two trees $T_1$ and $T_2$ such that $d_{\mathrm{MP}}(T_1,T_2)=2$, but the only way to reach this distance is by employing a convex character of at least 3 states (or by employing a 2-state character that is not convex on either one of the trees). 

We now generalize this finding with the following observation.

\begin{prop}\label{prop:dmpStatesNotEnough}
Let $r \geq 1$. Then, there exist two binary phylogenetic trees $T_1, T_2$ on $n=4r$ leaves such that $d_{\mathrm{MP}}(T_1,T_2)= r$ and such that \begin{enumerate} 
\item$ \vert \ell(T_1,\chi)-\ell(T_2,\chi)\vert=r$ for some character $\chi$ on precisely $r+1$ states which is convex on $T_1$ or $T_2$, and

\item $\max\limits_{\chi}\vert \ell(T_1,\chi)-\ell(T_2,\chi)\vert<r$ if the maximum runs over all characters $\chi$ with up to $r$ states that are convex on either $T_1$ or $T_2$.
\end{enumerate}
\end{prop}
\begin{proof}
We give an explicit construction of two trees $T_1$ and $T_2$ which fulfill both parts 1 and 2 of the proposition. For convenience and in order to use arguments based on Fitch's algorithm, we will construct them as rooted trees, but the statements hold analogously when the root position is disregarded. 

\begin{itemize} \item \textbf{Tree construction:} Both trees $T_1$ and $T_2$ which we construct in the following can be thought of as \enquote{caterpillars} of \enquote{double cherries}. 
In particular, we take the two starting trees $S_1= ((1,2),(3,4))$ and $S_2= ((1,3),(2,4))$ where $((a,b),(c,d))$ denotes the unique  rooted binary phylogenetic tree with two cherries $\{a,b\}$ and $\{c,d\}$. For $r=1$, we set $T_1=S_1$ and $T_2=S_2$. For $r\geq 2$, however, we start by taking $r$ copies of $S_1$ for $T_1$ and $r$ copies of $S_2$ for $T_2$. Then, we change the leaf labels in each of these copies as follows: In the $k^{th}$ copy of $S_1$ or $S_2$ (for $1\leq k \leq r$), which we will later on refer to as $S_k^1$ or $S_k^2$, respectively, we replace each leaf label $x$ by $x+4\cdot (k-1)$. This way, the leaf labels of the first copy remain unchanged (as $x+4\cdot (1-1)=x$), the second copy gets leaf labels $5=1+4\cdot (2-1)$ up to $8=4+4\cdot (2-1)$ and so forth. Last, for both $T_1$ and $T_2$, we take a caterpillar with $r$ leaves such that the unique cherry has labels 1 and 2 and the other leaves are subsequently labeled $3, \ldots, r$, starting with the leaf closest to the cherry and ending with the leaf closest to the root, cf. Figure \ref{fig:catbackbone}. Then we identify the $k^{th}$ leaf of this caterpillar with the root of the $k^{th}$ copy of $S_1$ to construct $T_1$ and with the root of the $k^{th}$ copy of $S_2$ to construct $T_2$. We refer to the caterpillar tree to which the copies of $S_1$ or $S_2$, respectively, have been assigned, as the \enquote{caterpillar backbone} of $T_1$ or $T_2$, respectively. An illustration for the case $r=3$ can be seen in Figure \ref{fig:catbackbone}. 

\begin{figure}[ht]      \centering
    \includegraphics[width=\linewidth]{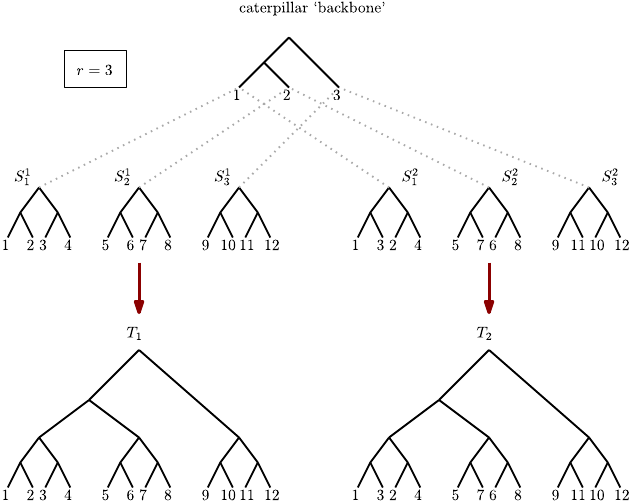}
    \caption{The construction of $T_1$ and $T_2$ as used in the proof of Proposition \ref{prop:dmpStatesNotEnough}. } \label{fig:catbackbone}
  \end{figure}

\item We now show that $d_{\mathrm{MP}}(T_1,T_2)\geq r$. In order to see this, consider the following character $\chi$: for $k=1,\ldots,r$, it assigns leaves with number $1+4(k-1)$ and $2+4(k-1)$ state $A$, i.e., leaves 1, 2, 5, 6, 9, 10, etc. get assigned state $A$. Moreover, it assigns to all remaining leaves in subtree $S_k^1$ (and hence $S_k^2$) state $B_k$ (for $k=1,\ldots,r$) -- so in total, $\chi$ employs $r+1$ states. It can be easily seen that $\ell(\chi,T_2)=2r$, which is due to the fact that $T_2$ has $2r$ cherries, all of which use two states.  However, as $A$ is contained in the Fitch set $\Phi(\rho_k^2)$ of the root of $S_k^2$ for each of the $r$ subtrees $S_k^2$ ($k=1,\ldots,r$) in $T_2$, for the caterpillar backbone of the tree, no more unions need to be taken. Thus, we indeed have $\ell(\chi,T_2)=2r$.

Similarly, it can easily be seen that $\ell(\chi,T_1)=r$, which is due to the fact that $T_1$ has $2r$ cherries, each of which uses a single state with a \enquote{sibling-cherry} in a different single state. Thus, in this case, each of the $r$ subtrees $S_k^1$ in $T_1$ requires one change each. However, as $A$ is contained in the Fitch set $\Phi(\rho_k^1)$ of the roots of the subtrees $S_k^1$ in $T_1$, for the caterpillar backbone of the tree, like with $T_2$, no more unions need to be taken. Thus, we have $\ell(\chi,T_2)-\ell(\chi,T_1)=2r-r=r$, which shows that $d_{\mathrm{MP}}(T_1,T_2) \geq r$.

\item Next we show that $d_{\mathrm{MP}}(T_1,T_2)\leq r$. In order to do so, let $\chi$ be a character with $|\ell(T_1,\chi)-\ell(T_2,\chi)|=d_{\mathrm{MP}}(T_1,T_2)$ and which is convex on one of the two trees and which, amongst all characters with these two properties, has a minimum number of states.
As $T_1$ and $T_2$ only differ in their leaf labelling, we may assume by symmetry and without loss of generality that $\chi$ is convex on $T_1$, i.e., we may drop the absolute value and assume $\ell(T_2,\chi)-\ell(T_1,\chi)=d_{\mathrm{MP}}(T_1,T_2)$. As $\chi$ employs the minimum number of states, we can derive from Observation \ref{observation:split} that it does not contain any taxa that are assigned a unique state. In order to see this, assume the contrary, i.e., there is a state, say $A$,  which is only assigned to one leaf, say $x$, by $\chi$. Let $B$ denote the state assigned to the cherry sibling, say $y$, of $x$ in $T_1$. Then the Fitch set of the parent $v$ of $x$ and $y$ will be $\Phi(v)=\{A,B\}$, which shows that $A$ and $B$ will be adjacent in $F_1$ regardless of the
most parsimonious extension of
$\chi$ we choose. Moreover, the pending edge leading to $x$ in $T_2$ induces a split separating all leaves in state $B$ from the single leaf $x$ in state $A$. This shows that states $A$ and $B$ meet the requirements of Observation \ref{observation:split}, so they form a good pair. This, however, contradicts  the assumption that $\chi$ was chosen as an optimal character with the minimum number of states. Thus, $\chi$ must employ each state at least twice\footnote{That we can always assume the existence of an optimal convex character in which each state is employed at least twice, was originally proven for all pairs of input trees in \cite[Observation 6.1]{kelk2017complexity}. We have given details here to make the current proof more self-contained.}. 

Next, observe that for all $k=1,\ldots,r$, we must have that there are at most two states assigned by $\chi$ to the leaves of $S_k^1$ in $T_1$ (and thus, automatically, also to the leaves of $S_k^2$ in $T_2$). This is because, if some $S_k^1$ uses three or more states, then from the convexity of $\chi$ on
$T_1$ it follows that at least one of these states will occur on only one leaf of $T_1$, contradicting the fact that each state is employed at least twice. Similarly, it cannot happen that there are two states assigned by $\chi$ to the leaves of $S_k^1$, such that one of the states occurs on three of the four leaves in $S_k^1$
and the other state on only one of the four leaves: again, by convexity, the state that appears on only one leaf of $S_k^1$ would appear on only one leaf in $T_1$.

So we may indeed conclude that in  each subtree $S_k^1$ of $T_1$, $\chi$ employs at most two states -- and if there are precisely two states contained in a subtree  $S_k^1$ (and $S_k^2$), both states are employed  twice. 

The latter observation has an important implication. Let us consider leaves $\{1,2,3,4\}$, for instance. We write
$\chi\!\left.\vphantom{\big|}\right|_{\{1,2,3,4\}}$ to denote the restriction of $\chi$ to the taxa $\{1,2,3,4\}$. By the convexity of $\chi$ on $T_1$, the only two possibilities 
are $\chi\!\left.\vphantom{\big|}\right|_{\{1,2,3,4\}}=AAAA$ and $\chi\!\left.\vphantom{\big|}\right|_{\{1,2,3,4\}}=AABB$. Crucially, regardless of which holds, the Fitch sets assigned to the roots $\rho_1^1$ of $S_1^1$ and $\rho_1^2$ of $S_1^2$ are identical, even though this set is sometimes the result of taking the union and sometimes of an intersection. The same is true for all subtrees $S_k^1$ and $S_k^2$ (for $k=1,\ldots,r$) by symmetry.

The fact that for $\chi$ we have $\Phi(\rho_k^1)=\Phi(\rho_k^2)$ for all $k=1,\ldots,r$ implies that the bottom-up phase of Fitch's algorithm will assign precisely the same number of union sets to the caterpillar backbone of $T_1$ as to the caterpillar backbone of $T_2$. This, in turn, implies that the contribution of these sets to the difference $\ell(\chi,T_2)-\ell(\chi,T_1)$ is 0 (as they cancel out). Thus, only the number of union sets assigned to the inner nodes of the subtrees $S_k^1$ and $S_k^2$ contributes to the difference $\ell(\chi,T_2)-\ell(\chi,T_1)$ and thus to $d_{\mathrm{MP}}(T_1,T_2)$.

Now, if in $S_k^1$ and $S_k^2$ only one character state is used, i.e., if $\chi$ restricted to the taxon set of this subtree is, say, $AAAA$, then the parsimony score on both subtrees is 0, and so is the difference between these scores. Thus, subtrees $S_k^1$ and $S_k^2$ in which only one state is used do not contribute to $d_{\mathrm{MP}}(T_1,T_2)$. If, however, two states are used twice, then by the convexity of $\chi$ on $T_1$ the only possibility is $AABB$. In this case, we have score 1 on $S_k^1$ and score 2 on $S_k^2$, so the contribution of subtrees $S_k^1$ and $S_k^2$ to the difference $\ell(\chi,T_2)-\ell(\chi,T_1)$ is 1.

In summary, the caterpillar backbone does not contribute to $d_{\mathrm{MP}}(T_1,T_2)$, only subtrees $S_k^1$ and $S_k^2$ do (for $k=1,\ldots,r$). However, each pair $(S_k^1,S_k^2$) of these subtrees contributes at most 1 to $d_{\mathrm{MP}}(T_1,T_2)=\ell(\chi,T_2)-\ell(\chi,T_1)$, and as we have $r$ such pairs, this shows that $d_{\mathrm{MP}}(T_1,T_2)\leq r$.

Together with the previous part of the proof, this shows that $d_{\mathrm{MP}}(T_1,T_2)=r$, and we have also already seen above that this distance can be achieved with a character employing precisely $r+1$ states. So we know that part 1 of the proposition holds.

\item It remains to establish part 2 of the proposition. That is, to  show that $|\ell(T_2,\chi)-\ell(T_1,\chi)|<r$ for all characters $\chi$ which are convex on $T_1$ (or $T_2$, but due to symmetry, we do not need to consider this for our construction) but employ at most $r$ states. Let $\chi$ in the following be again an optimal character with the minimum number of states, such that $\chi$ is convex on $T_1$, and assume that we have $d_{\mathrm{MP}}(T_1,T_2)=\ell(\chi,T_2)-\ell(\chi,T_1)=r$. 

Recall the above arguments with which we showed that an optimal character $\chi$ with the minimum number of states must assign either only one state to the leaves of each subtree $S_k^1$ (and thus also $S_k^2$), or two states twice. 

Suppose there is at least one $i \in \{1,\ldots,r\}$ such that $\chi$ assigns only one state, say $A$, to $S_i^1$. Then, as we have also seen above, the parsimony scores of the restriction of $\chi$ to the taxa of $S_i^1$  are 0 both for 
$S_i^1$ and $S_{i}^2$. As the caterpillar backbone does not contribute to $d_{\mathrm{MP}}$ but only the $S_k^1$ and $S_k^2$ subtrees do (again as shown above), there are now, due to the existence of $i$, at most $r-1$ values of $k$ such that $S_k^1$ and $S_k^2$ contribute to $d_{\mathrm{MP}}$. As shown above, each pair $(S_k^1,S_k^2)$ of such subtrees can contribute at most 1 to $d_{\mathrm{MP}}(T_1,T_2)$, which shows that $d_{\mathrm{MP}}(T_1,T_2)\leq r-1$. This is a contradiction, as we have already shown that $d_{\mathrm{MP}}(T_1,T_2)=r$. 

Therefore, we know that none of the subtrees $S_k^1$ can employ only one state. So all of these subtrees have to employ precisely two states twice. 

Now, observe by the convexity of $\chi$ on $T_1$ that there cannot be a subtree $S_k^1$ such that the two states, say $A$ and $B$, assigned to this subtree by $\chi$ are \emph{both} also present in another such subtree.

Therefore, for each $k\in\{1,\ldots,r\}$, $S_k^1$ contains at most one state that also occurs in at least one more subtree, say $S_j^1$ (with $k\neq j$). Thus, for each $k \in \{1,\ldots,r\}$ we have one state used by $\chi$ which is unique to subtree $S_k^1$. This leads to $r$ different states which label precisely half of the leaves of $T_1$. The leaves that are then still unlabelled altogether require at least one more state, so $\chi$ employs at least $r+1$ states. This shows that every character obtaining $d_{\mathrm{MP}}(T_1,T_2)$ must employ at least $r+1$ states. 
\end{itemize}
This completes the proof.
\end{proof}

\section{Empirical analysis}
\label{sec:experiment}
To better understand the significance of our bounds, we performed the following simple experiment. We took the 735 subtree-reduced\footnote{In this case, subtree-reduced refers to collapsing common pendant subtrees, i.e. repeatedly collapsing common cherries into a single leaf until there are no more common cherries. Note that this kind of reduction does not alter $d_{\mathrm{MP}}$ or $d_{\mathrm{TBR}}$ \cite{allen2001subtree,kelk2016reduction}. It also does not affect the minimum number of states required by convex characters achieving $d_{\mathrm{MP}}$.} tree pairs from the dataset in \cite{van2022reflections} (which were further analyzed in \cite{FROHN2025107364}) and extracted those for which $d_{\mathrm{MP}}$ is known exactly: these are tree pairs for which known lower bounds on $d_{\mathrm{MP}}$ match known upper bounds on $d_{\mathrm{TBR}}$ (recall that $d_{\mathrm{MP}} \leq 
d_{\mathrm{TBR}}$).
There were 644 such tree pairs. The first two columns in Table \ref{tab:stats} below show the minimum, maximum, average and standard deviation of the number of taxa and $d_{\mathrm{MP}}$ amongst these 644 trees. For each tree pair we repeatedly randomly sampled characters that are convex on one of the trees, noting the minimum number of states amongst all those convex characters that achieved a distance of $d_{\mathrm{MP}}$; for this we used the efficient random sampling strategy  described in \cite{kelk2017note}. Due to the fact that we were only sampling, this is only an \emph{upper bound} $\hat{s}$ on the minimum number of states $s$ required for the given tree pair. Summary statistics for $\hat{s}$ are shown in the third column of the table and for $\frac{\hat{s}}{ d_{\mathrm{MP}}}$ in the fourth column. Recall that if $\hat{s} = s$ i.e. $\hat{s}$ is the true minimum number of states, this ratio will be at most 2, due to the main theorem of this paper.
The fifth column summarizes the additive gap between the $d_{\mathrm{MP}}+1$ lower bound and $\hat{s}$.

Clearly, there is no evidence in this experiment that the $d_{\mathrm{MP}}+1$ lower bound must be strengthened (i.e. increased) further, since the MIN value in the fifth column is non-negative. Given that tree pairs requiring $d_{\mathrm{MP}}+1$ states had to be very carefully constructed (see Section \ref{sec:mareike}), it is not altogether
surprising that all the tree pairs in the experiment required at most (in fact, strictly fewer than) $d_{\mathrm{MP}}+1$ states. What is somewhat surprising is 
the average bound of only $0.44 d_{\mathrm{MP}}$ states, and the fairly tight distribution around
this mean (a standard deviation of 0.16). This suggests that for many trees far fewer than $d_{\mathrm{MP}}$ states are required by
convex characters to achieve $d_{\mathrm{MP}}$; this is a long way from our proven upper bound of $2d_{\mathrm{MP}}$. Given that the number of convex characters grows comparatively
slowly, this also suggests that for quite high $d_{\mathrm{MP}}$ an enumeration or sampling strategy amongst the space of convex characters with at most $0.44d_{\mathrm{MP}}$ states often suffices to discover $d_{\mathrm{MP}}$, and is computationally not too taxing, although discovering such a convex character does not constitute a certificate: this also requires a matching upper bound on $d_{\mathrm{MP}}$. It is well-known that in many cases (upper bounds on) $d_{\mathrm{TBR}}$ can function
as such a matching bound, but in cases where the two distances do not match there is still very little known about how to certify the optimality of a given (convex) character. Indeed, the possibility exists that the average
bound of $0.44 d_{\mathrm{MP}}$ is influenced by the fact that all tree pairs in this experiment have the
property $d_{\mathrm{MP}} = d_{\mathrm{TBR}}$, although it is currently computationally 
infeasible to conduct
such an experiment without this assumption.

The full data can be downloaded from \url{https://github.com/skelk2001/boundingconvexstates/}.
\begin{table}[h!]
\centering
\begin{tabular}{lccccc}
\hline
 & taxa & $d_{\mathrm{MP}}$ & $\hat{s}$ & $\frac{\hat{s}}{d_{\mathrm{MP}}}$ & ($d_{\mathrm{MP}}$+1)-$\hat{s}$ \\
\hline
MIN & 18 & 4 & 2 & 0.17 & 1 \\
AVG & 94.50 & 17.87 & 7.04 & 0.44 & 11.83 \\
MAX & 223 & 35 & 17 & 1.00 & 28 \\
STDEV & 42.68 & 9.18 & 3.22 & 0.16 & 6.88 \\
\hline
\end{tabular}
\caption{Summary statistics for (1) the number of taxa, (2) $d_{\mathrm{MP}}$, (3) the computed upper bound $\hat{s}$ on the number of states
required by convex characters achieving $d_{\mathrm{MP}}$, (4) the computed upper bound divided by $d_{\mathrm{MP}}$, (5) 
the additive gap between the lower bound of $d_{\mathrm{MP}}+1$ states and the computed upper bound. The dataset consists of 644 tree pairs for which $d_{\mathrm{MP}}$ is known. Fractional numbers are rounded to two decimal places.
}
\label{tab:stats}
\end{table}

\section{Conclusions}
\label{sec:future}

We have shown that there always exists an optimal convex character with at most $2d_{\mathrm{MP}}(T_1, T_2)$ states, improving upon the previous bound of $7d_{\mathrm{MP}}(T_1, T_2)-5$. We conjecture that our bound can be further strengthened to $d_{\mathrm{MP}}(T_1, T_2)+1$ states, which would then match our lower bound. We have several reasons for believing this. First, the unique states bound inherent in Theorem \ref{thm:moonshot}, beyond which good pairs are guaranteed to exist, can probably be improved for repeating states with multiplicity 3 or higher. However, it becomes increasingly difficult and intricate to argue that in $T_2$ the parsimony score of the relabelled character $\newchi$ does not decrease too much compared to the original character $\chi$. Also, a strengthening of Theorem \ref{thm:moonshot} does not automatically improve the overall bound on the number of states required in an optimal convex character. A second reason for optimism is that our technique currently only considers \emph{unique} states that are adjacent in $T_1$ to a \emph{single} repeating state $B$. It is plausible that both unique and repeating states can function as blocking states
in a wider array of situations than presently used. However, the technicalities involved will indeed be considerable. By way of illustration, we have a sketch proof, not shown here, that for $d_{\mathrm{MP}}(T_1, T_2)=2$, 3 states are sufficient, but the techniques involved in the proof and the resulting details are rather specific and are not yet generalisable. Third, although only a purely empirical insight, the experiment in Section \ref{sec:experiment} hints that it is far more likely that our upper bound needs to be reduced significantly, than that the lower bound needs to be increased.

Finally, we return to the result \cite{jones2021maximum} where following \cite{kelk2016reduction} it is shown that, after applying the subtree and chain reduction rules to exhaustion, two new trees with the same $d_{\mathrm{MP}}$ are obtained with (ignoring additive terms) at most $560 d_{\mathrm{MP}}$ taxa. Our $2d_{\mathrm{MP}}$ upper bound on the number of states has a similar function-of-$d_{\mathrm{MP}}$ flavour that is characteristic of the kernelization literature \cite{fomin2019kernelization}. It is therefore quite natural to ask whether our upper bound, or the machinery we have developed to prove it, could be used to lower the constant 560. It is clear that a convex character with at most $2d_{\mathrm{MP}}$ states on a tree with perhaps as many as $560 d_{\mathrm{MP}}$ taxa will always have at least one character state that contains `very many' taxa. Perhaps the regions of the two trees that have this state can be shrunk without altering the parsimony score on either tree, thus reducing the number of taxa overall and thus hopefully also the constant 560. However, it is currently far from obvious how a polynomial-time data reduction rule could be designed to target such structures; this requires further research.

\bibliographystyle{plain} 
\bibliography{References}

\begin{thebibliography}{10}

\bibitem{allen2001subtree}
Benjamin Allen and Mike Steel.
\newblock Subtree transfer operations and their induced metrics on evolutionary trees.
\newblock {\em Annals of Combinatorics}, 5(1):1--15, 2001.

\bibitem{boes2016linear}
Olivier Boes, Mareike Fischer, and Steven Kelk.
\newblock A linear bound on the number of states in optimal convex characters for maximum parsimony distance.
\newblock {\em IEEE/ACM Transactions on Computational Biology and Bioinformatics}, 14(2):472--477, 2016.

\bibitem{bruen2008parsimony}
Trevor Bruen and David Bryant.
\newblock Parsimony via consensus.
\newblock {\em Systematic Biology}, 57(2):251--256, 2008.

\bibitem{DEEN2024103477}
Elise Deen, Leo {van Iersel}, Remie Janssen, Mark Jones, Yukihiro Murakami, and Norbert Zeh.
\newblock A near-linear kernel for bounded-state parsimony distance.
\newblock {\em Journal of Computer and System Sciences}, 140:103477, 2024.

\bibitem{FischerKelk2016}
Mareike Fischer and Steven Kelk.
\newblock On the maximum parsimony distance between phylogenetic trees.
\newblock {\em Annals of Combinatorics}, 20(1):87--113, Mar 2016.

\bibitem{Fitch71}
Walter Fitch.
\newblock Toward defining the course of evolution: minimum change for a specific tree topology.
\newblock {\em Systematic Zoology}, 20(4):406--416, 1971.

\bibitem{fomin2019kernelization}
Fedor~V Fomin, Daniel Lokshtanov, Saket Saurabh, and Meirav Zehavi.
\newblock {\em Kernelization: theory of parameterized preprocessing}.
\newblock Cambridge University Press, 2019.

\bibitem{FROHN2025107364}
Martin Frohn, Steven Kelk, and Simona Vychytilova.
\newblock A branch-\&-price approach to the unrooted maximum agreement forest problem.
\newblock {\em Operations Research Letters}, 63:107364, 2025.

\bibitem{Hickey2008}
Glenn Hickey, Frank Dehne, Andrew Rau-Chaplin, and Christian Blouin.
\newblock {S}{P}{R} distance computation for unrooted trees.
\newblock {\em Evolutionary Bioinformatics}, 4:17--27, 2008.

\bibitem{jones2021maximum}
Mark Jones, Steven Kelk, and Leen Stougie.
\newblock Maximum parsimony distance on phylogenetic trees: A linear kernel and constant factor approximation algorithm.
\newblock {\em Journal of Computer and System Sciences}, 117:165--181, 2021.

\bibitem{kelk2017complexity}
Steven Kelk and Mareike Fischer.
\newblock On the complexity of computing {M}{P} distance between binary phylogenetic trees.
\newblock {\em Annals of Combinatorics}, 21(4):573--604, 2017.

\bibitem{kelk2016reduction}
Steven Kelk, Mareike Fischer, Vincent Moulton, and Taoyang Wu.
\newblock Reduction rules for the maximum parsimony distance on phylogenetic trees.
\newblock {\em Theoretical Computer Science}, 646:1--15, 2016.

\bibitem{kelk2017note}
Steven Kelk and Georgios Stamoulis.
\newblock A note on convex characters, {F}ibonacci numbers and exponential-time algorithms.
\newblock {\em Advances in Applied Mathematics}, 84:34--46, 2017.

\bibitem{moulton2015parsimony}
Vincent Moulton and Taoyang Wu.
\newblock A parsimony-based metric for phylogenetic trees.
\newblock {\em Advances in Applied Mathematics}, 66:22--45, 2015.

\bibitem{SempleSteel2003}
Charles Semple and Mike Steel.
\newblock {\em Phylogenetics}.
\newblock Oxford University Press, 2003.

\bibitem{st2017shape}
Katherine St.~John.
\newblock The shape of phylogenetic treespace.
\newblock {\em Systematic Biology}, 66(1):e83--e94, 2017.

\bibitem{steel1992complexity}
Michael Steel.
\newblock The complexity of reconstructing trees from qualitative characters and subtrees.
\newblock {\em Journal of Classification}, 9:91--116, 1992.

\bibitem{van2022reflections}
Rim van Wersch, Steven Kelk, Simone Linz, and Georgios Stamoulis.
\newblock Reflections on kernelizing and computing unrooted agreement forests.
\newblock {\em Annals of Operations Research}, 309(1):425--451, 2022.

\end{thebibliography}

\end{document}